\newtheorem{question}{Question}[section]
\newtheorem{conjecture}[question]{Conjecture}
\title{Random Simple-Homotopy Theory}
\author{Bruno Benedetti\thanks{Department of Mathematics, U Miami, Coral Gables, FL 33146, USA; \texttt{bruno@math.miami.edu.}}, 
            
            Crystal Lai\thanks{Institut f\"ur Mathematik, TU Berlin, Stra{\ss}e des 17.\ Juni 136, 10623 Berlin, Germany; \texttt{$\{$lofano, lutz$\}$@math.tu-berlin.de}.}\,\,\,\,, 
            Davide Lofano$^{**}$, Frank H.~Lutz$^{**}$}
\date{September 25, 2021}
\begin{document}

\maketitle

\begin{abstract}
We implement an algorithm \emph{RSHT (Random Simple-Homotopy)} to study the simple-homotopy types of simplicial complexes, 
with a particular focus on contractible spaces and on finding substructures in higher-dimensional complexes. 
The algorithm combines elementary simplicial collapses with \emph{pure elementary expansions}. 
For triangulated $d$-manifolds with $d\leq 6$, we show that RSHT reduces to (random) bistellar flips.
 
Among the many examples on which we test RSHT, we describe an explicit $15$-vertex triangulation of the Abalone, and more generally, 
$(14k+1)$-vertex triangulations of Bing's houses with $k$ rooms, $k\geq 3$, which all can be deformed to a point using only six pure elementary expansions.
\end{abstract}

\section{Introduction}

 \enlargethispage{4mm}
A standard task in topology is to simplify a given presentation of a topological space. In general, this task cannot be performed algorithmically: 
Even the simplest homotopic property, contractibility, is undecidable. Nevertheless, here we propose a simple randomized algorithm 
to modify triangulations while keeping the simple-homotopy type of a space. The algorithm can be used as a heuristic
to study simply-connected complexes, or, more generally, complexes whose fundamental group has no Whitehead torsion.
We shall see that in several contractible examples the heuristics works very well.  The algorithm is also of interest when applied to manifolds or complexes of arbitrary topology, as we discuss below. 

Our work builds on that of Whitehead, who in 1939 introduced a discrete version of homotopy theory, called \emph{simple-homotopy theory} \cite{whitehead1939simplicial}. 
An \emph{elementary collapse} is a deletion from a simplicial complex of a free face, i.e., of a non-empty face that is properly contained in only one other face,
along with that face it is contained in. Elementary collapses are deformation retracts, and thus maintain the homotopy type; the same is true for their inverse moves, 
\emph{elementary anticollapses}. Two simplicial complexes are of the same \emph{simple-homotopy type} if they can be transformed
into one another via some sequence of collapses and anticollapses, called a \emph{formal deformation} \cite{whitehead1939simplicial}.

Equivalently, two simplicial complexes are of the same simple-homotopy type if there exists a third complex that can be reduced 
to both the original ones via suitable sequences of elementary collapses \cite[p.~13]{Metzler1993}.
The size of the third complex (or, using the former definition, the length of the formal deformation) cannot be bounded a priori, 
because the simple-homotopy type cannot be decided algorithmically. In fact, by a famous result of Whitehead, having the simple-homotopy type 
of a point is equivalent to being contractible \cite{whitehead1939simplicial} and thus undecidable. 

In contrast, it is possible to decide algorithmically  whether a given complex is \emph{collapsible}, i.e., whether it can be reduced via collapses onto a single vertex. 
This decision problem was recently proved to be NP-complete by Tancer~\cite{tancer2016recognition}.
The advantage of the collapsibility notion is that all intermediate steps in the reduction are simplicial complexes of smaller and smaller size, hence very easy 
to encode and work with. The drawback is that collapsibility is strictly stronger than contractibility: 
Many ``elementary'' contractible complexes, like the Dunce Hat \cite{Zeeman64Dunce} or Bing's House with two rooms  \cite{Bing1964}, are not collapsible. 

In 1998, Forman introduced a second way to study contractibility combinatorially. His Discrete Morse theory \cite{forman1998morse,forman2002user} 
is a tool to reduce simplicial complexes using a mix of collapses and facet deletions. The advantage is that all simplicial complexes (contractible or not) 
can now be reduced to a vertex, possibly by using a relatively large number of facet deletions. 
The drawback is that even if one starts with a simplicial complex, the intermediate steps in the reduction sequence are typically non-regular CW complexes, 
and thus harder to handle. By only focusing on the count of facet deletions (the so-called ``discrete Morse vector'') it is possible to use randomness 
to produce fast implementations \cite{benedetti2014random}, but at the cost of failing to recognize many contractible complexes. 
See \cite{joswig2014}, \cite{ABL}, and \cite{lofano2019},  for computational and theoretical obstructions.

In this paper, we go back to Whitehead's original idea, and propose a third simplification method based on collapses in combination with certain expansions.
Our randomized heuristic \emph{Random Simple-Homotopy} (\emph{RSHT}; see Section~\ref{sec:implementation}) has two advantages: 
First, all intermediate steps are indeed simplicial complexes; and second, at the moment we do not know of a single contractible complex 
for which our heuristics has probability zero to succeed in recognizing contractibility.
 
Here is the idea. We perform elementary collapses until we get stuck. Then we select a top dimensional face $\varrho$ %$\sigma=[v_0v_1\ldots v_d]$ 
uniformly at random, and  for all $d$-faces $\varrho'$ adjacent to $\varrho$ via a $(d-1)$-dimensional ridge, we check if the subcomplex induced on the $d+2$ vertices 
%$\{v_0,v_1,\ldots, v_d,v_{d+1}\}$ 
of $\varrho \cup \varrho'$ is a \emph{pure $d$-dimensional ball}. This test is very fast. If for some $\varrho'$ the answer is positive, 
we glue onto our complex the full $(d+1)$-simplex $\sigma$ %$[v_0v_1\ldots v_dv_{d+1}]$ 
on the vertices of $\varrho \cup \varrho'$. If for several $\varrho'$s the answer is positive, 
we simply choose one uniformly at random. 

This glueing step is called a \emph{pure elementary $(d+1)$-expansion}, and it is also classical from the topological perspective, compare \cite[Chapter I]{Metzler1993}. 
After this step, we may collapse away the newly introduced $(d+1)$-simplex $\sigma$ together with any $d$-face $\tau$ of it. To avoid undoing the pure elementary expansion, 
we must select a~$\tau$ that was already present in the complex we got stuck at before the pure elementary expansion. This first elementary collapse 
after the pure elementary expansion  is called ``(CC) step'' below (see Section~\ref{sec:implementation}). The combination ``pure elementary expansion + (CC) step'', 
known in the topological literature as ``transient move'' \cite{Metzler1993}, maintains both the dimension and the simple-homotopy type: 
In fact, any pure elementary expansion can be viewed as a composition of back-to-back elementary anticollapses.
  
Whitehead proved that for every contractible complex there is a formal deformation that reduces it  to a single point \cite{whitehead1939simplicial}. 
It is not known if there is also a formal deformation to a point in which one performs anticollapses or expansions ``only when stuck'',
i.e., only to intermediate complexes without free faces. If this is true, then indeed any contractible complex would have a positive probability 
to be recognized by our heuristics. Of course, we cannot in any case expect any universal upper bound on the number of elementary anticollapses needed, 
or else we would have found an algorithm that recognizes contractibility. 

However, we shall see in Sections~\ref{sec:Dunce-Bing} and \ref{sec:substructures} that in many key examples the number of pure elementary expansions needed 
is relatively low. As a benchmark series, we build \emph{Bing's house with $k$ rooms}, a one-parameter generalization of the aforementioned Bing's house with two rooms.  
For all $k\geq 3$, we prove that Bing's house with $k$ rooms can be collapsed by adding only six further tetrahedra, cleverly chosen (\textbf{Theorem \ref{thm:Bingk}}). 
Of course, since our algorithm is randomized, there is no guarantee that precisely those tetrahedra will be selected as expansions. 
But even with a quick attempt consisting of $10^4$ runs, our algorithm is able to reduce Bing's house with seven rooms (which is a $2$-complex on $99$ vertices) 
to a point by adding only $41$ tetrahedra; see Table~\ref{tbl:examples}.

\emph{Random Simple-Homotopy (RSHT)} works with simplicial complexes of arbitrary dimension, but it is of particular interest 
when applied to triangulations of low-dimensional manifolds.  When $d \le 6$, we show (in \textbf{Theorem \ref{thm:bistellarflips}}) 
that on any (closed) $d$-manifold RSHT has basically the same effect of performing \emph{bistellar flips}, also known as \emph{Pachner moves}, 
which are the standard ergodic moves that allow to transform into one another any two PL homeomorphic triangulations 
of the same manifold~\cite{Pachner1987}. 

In Section \ref{sec:substructures}, we discuss how RSHT can be used to reach interesting small (or even vertex-minimal) 
triangulations and subcomplexes ``hidden'' inside triangulated manifolds. For the sake of applications, one should declare right away
that RSHT is designed to focus on the (simple-)homotopy, and not the homeomorphism type. So in case we start with a collection 
of points in 10-space, say, which all lie ``approximately'' on a M\"obius strip, the effect of performing RSHT on the \v{C}ech complex 
of the point set would be to detect an~$S^1$, and not a M\"obius strip. Yet, RSHT is capable of detecting, for example, closed surfaces
or higher-dimensional closed manifolds in data, beyond just determining their homologies.
 
 \enlargethispage{3mm}
 
It takes considerable effort to build examples of contractible complexes for which RSHT does \emph{not} practically succeed 
in revealing contractibility, if interrupted after a million steps, say. Respective examples, showcased in the last Section \ref{sec:AK} of our paper, 
are based on the Akbulut--Kirby $4$-spheres \cite{akbulut1985} and triangulations thereof \cite{tsuruga2013}. The homeomorphism type of these 
``tangled'' triangulations of $S^4$ is notoriously difficult to recognize.

\section{Pure elementary expansions}

Any two simple-homotopy equivalent complexes are homotopy equivalent. The converse is true for complexes whose fundamental group 
has trivial \emph{Whitehead group} (see \cite{CohenSimple} or \cite{Mnev} for the definition), but  false in general: 
Counterexamples in dimension two can be obtained by triangulating the cell complexes in~\cite{Lustig},
%Further $2$-dimensional counterexamples were independently obtained with different methods in \cite{Metzler1990}, 
while counterexamples in dimension three or higher had been known to exist long before \cite{Milnor}.

It is an easy consequence of the theory of Gaussian elimination for integer matrices that the Whitehead group of the trivial group is trivial. 
Therefore, any two homotopy-equivalent simply-connected complexes are also simple-homotopy equivalent. 
More generally, it is known that the Whitehead group of a group $G$ is trivial if $G$ is 
\begin{compactitem}
\item $\mathbb{Z}$ \cite{Higman}, $\mathbb{Z} \oplus \mathbb{Z}$  \cite{BassHellerSwan}, and more generally, any free Abelian group  \cite{BassHellerSwan}, 
\item any of the cyclic groups $\mathbb{Z}_2$, $\mathbb{Z}_3$, $\mathbb{Z}_4$, $\mathbb{Z}_6$ \cite{CohenSimple},  %has exactly $\lfloor \frac{m+2}{2} \rfloor$ divisors, cf.~\cite{CohenSimple}, 
\item any subgroup of the braid group $B_n$ \cite{FarrellRouchon}, or of any Artin group of type $A_n$, $D_n$, $F_4$, $G_2$, $I_2(p)$, $\tilde{A}_n$, $\tilde{B}_n$, $\tilde{C}_n$, or $G(de, e, r)$ ($d,r \ge 2$) \cite{roushon2020certain};
\item any free product of groups listed above, so in particular  $\mathbb{Z} \ast \mathbb{Z}$ or any free group \cite{Stallings};
\item and further cases \cite{grenier2007triviality}; in fact, the Farrell--Jones conjecture implies that any torsion-free group should appear in the present list \cite{LuckEtAl}.

\end{compactitem}

 Any two homotopy-equivalent complexes whose fundamental group appears in the list above are of the same simple-homotopy type. 

Whitehead's work allows us to be more specific on the dimension (although not on the number) of the intermediate complexes involved in the definition 
of simple-homotopy equivalence, as follows. An elementary collapse is called an \emph{$i$-collapse} if the dimension of the two faces removed are $i-1$ and $i$.
Similarly, an \emph{$i$-anticollapse} is one that adds a pair of faces of dimension~$i-1$ and $i$.  The \emph{order} of a formal deformation
will be the maximum $i$ for which $i$-collapses or $i$-anticollapses are involved in the sequence.

%Note en passant that some complexes whose order of deformation does not exceed $d+1$, do not retract to a point, like the Dunce Hat.

\begin{theorem} [Whitehead {\cite[Theorems 20 \& 21]{whitehead1939simplicial}}] \label{thm:Whitehead}
Let  $K$ and $L$ be two homotopy-equivalent simplicial complexes. If the Whitehead group of their fundamental group is trivial, 
then there is a formal deformation from $K$ to $L$ whose order  does not exceed $\max \{ \dim K, \dim L \} + 2$. 
If, in addition, $L$ is a deformation retract of $K$, and $\dim K > 2$, then  there is a formal deformation from $K$ to $L$ whose order  does not exceed $\dim K+1$.
\end{theorem}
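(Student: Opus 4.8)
The plan is to isolate a single ``core lemma'' --- that a subcomplex which is a deformation retract of an ambient complex of dimension $n\ge 3$ can be reached by a formal deformation of order $\le n+1$ --- and to derive both clauses of the theorem from it: the second directly, the first via the mapping cylinder. For the reduction: given a homotopy equivalence $f\colon K\to L$, replace it by a simplicial map $f\colon K'\to L'$ after iterated barycentric subdivision; subdividing an $m$-complex is itself a formal deformation of order $\le m+1$, so as long as the target bound is at least $\max\{\dim K,\dim L\}+1$ this does no harm. Form the simplicial mapping cylinder $M$ of $f\colon K'\to L'$. Then $\dim M\le\max\{\dim K,\dim L\}+1$, the complex $M$ collapses onto $L'$ (order $\le\dim M$), and $K'$ is a subcomplex of $M$ which is a deformation retract of $M$ (this uses that $f$ is a homotopy equivalence). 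Hence the first clause follows from the core lemma applied to the pair $K'\subseteq M$: composing the subdivision deformation $K\to K'$, the reverse of the core-lemma deformation $M\to K'$, the collapse $M\searrow L'$, and the reverse subdivision deformation $L'\to L$ yields a formal deformation $K\to L$ of order $\le\dim M+1\le\max\{\dim K,\dim L\}+2$. (When $\dim M\le 2$, i.e.\ when $K$ and $L$ are homotopy-equivalent graphs, the statement is elementary and classical.) For the second clause no cylinder is needed: $L$ is already a subcomplex of $K$ and a deformation retract, so the core lemma with $n=\dim K$ applies verbatim.

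It remains to prove the core lemma, so assume $L\subseteq K$ is a deformation retract and write $n=\dim K\ge 3$. The obstruction to a bounded formal deformation is the Whitehead torsion: the cellular chain complex of the pair of universal covers $(\widetilde K,\widetilde L)$ is a based, acyclic complex of finitely generated free $\mathbb{Z}[\pi_1K]$-modules, and its torsion $\tau(K,L)$ lies in $\mathrm{Wh}(\pi_1 K)$ \cite{CohenSimple}. By hypothesis this group is trivial, so $\tau(K,L)=0$; algebraically this means that the based chain complex of $(\widetilde K,\widetilde L)$ becomes, after adding elementary trivial summands, reducible to zero using only operations that do not change torsion --- chain isomorphisms expressed in the preferred bases by products of elementary and $(\pm\pi_1 K)$-monomial matrices.

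The heart of the proof is to realize this algebraic cancellation geometrically while controlling the dimension. Each elementary operation on the chain complex is mirrored by a ``transient move'' \cite{Metzler1993}: an elementary anticollapse introducing a cancelling pair of cells in dimensions $i,i+1$, followed by an elementary collapse removing a different pair, so that the net effect on the relative chain complex is a handle slide. Performing these moves in the right order --- first expand the top-dimensional cells of $K\setminus L$ into ``free handles'', then use the algebraic contraction to slide and cancel them from the top down --- never creates a cell of dimension above $n+1$, since each handle slide needs only one auxiliary cell dimension; and the required rearrangement of attaching maps can be carried out precisely because $n\ge 3$, which provides enough room in $\pi_1 K$ and rules out the $2$-dimensional, Andrews--Curtis-type obstructions. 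The outcome is a formal deformation from $K$ to $L$ of order $\le n+1$, which is the core lemma; concatenating with the subdivision and cylinder steps above proves the theorem.

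The main obstacle is the last step. The vanishing of $\mathrm{Wh}(\pi_1)$ is a purely $K$-theoretic fact about matrices and yields an \emph{abstract} simple-homotopy equivalence almost for free (this is essentially Whitehead's theorem that $\tau=0$ iff the inclusion is a simple-homotopy equivalence); but squeezing the resulting formal deformation into order $n+1$, rather than into some uncontrolled dimension, requires the delicate handle-cancellation bookkeeping of \cite[Theorems 20 \& 21]{whitehead1939simplicial}, and it genuinely fails for $n=2$ --- which is exactly why that case is excluded from the sharper bound and only the $+2$ estimate survives there, through the mapping cylinder.
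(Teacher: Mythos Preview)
The paper does not give its own proof of this theorem: it is stated with attribution to Whitehead \cite[Theorems 20 \& 21]{whitehead1939simplicial} and used as a black box, so there is no in-paper argument to compare against.

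Your outline is the classical one and is structurally sound. The reduction of the first clause to the second via the simplicial mapping cylinder is standard and correctly dimensioned; the identification of the obstruction with the Whitehead torsion $\tau(K,L)\in\mathrm{Wh}(\pi_1 K)$ is exactly right; and you correctly isolate where the hypothesis $n\ge 3$ enters, namely in the geometric realization of elementary matrix operations as handle slides / transient moves. You also correctly flag that the $n=2$ case is the Andrews--Curtis regime and is excluded from the sharper bound.

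That said, what you have written is a proof \emph{plan}, not a proof. The sentence ``Performing these moves in the right order \ldots\ never creates a cell of dimension above $n+1$'' is the entire content of Whitehead's Theorems~20--21 (equivalently, of the handle-trading argument in \cite{CohenSimple}), and you have not actually carried it out: one must first normalize the pair so that $K\setminus L$ has cells only in two adjacent dimensions (this already uses $n\ge 3$), then realize each elementary row operation on the boundary matrix by an explicit expansion--collapse pair, and finally cancel the remaining complementary handle pairs. Each of these steps is where the work lives, and phrases like ``enough room in $\pi_1 K$'' or ``the required rearrangement of attaching maps can be carried out'' are not arguments. If this is meant as a sketch pointing to \cite{whitehead1939simplicial} and \cite{CohenSimple} for the details, it is accurate; if it is meant to stand on its own, the core lemma is asserted rather than proved.
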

 
The conjecture that the last statement of the previous theorem might hold also for the case $\dim K = 2$ goes under the name of ``Generalized Andrews--Curtis conjecture'', 
and represents a major open problem in combinatorial topology. It is, however, generally believed to be false \cite{Metzler1993}.
   
Based on Whitehead's work, we would now like to perform ``random anticollapses''. Yet, if we wish to add a  $(d+1)$-dimensional face $\sigma$ to $K$ 
in an elementary anticollapse, then all $d$-faces of $\sigma$ need to be present in $K$ already, except for a single $d$-face~$\tau$. 
However, it is not difficult to construct contractible $d$-complexes $K$ that do not allow any $(d+1)$-anticollapses; cf.~\cite{lofano2019}. 
In these cases, lower-dimensional faces need to be added first. Computationally, this brings an extra difficulty to the introduction of a random model.  
To bypass this difficulty, we adopt a different set of moves.

\begin{definition}
Let $K$ be a $d$-dimensional complex. A \emph{pure elementary $(d+1)$-expansion} is the gluing of a $(d+1)$-dimensional simplex $\sigma$ to $K$ 
in case $\sigma$ intersects $K$ in a $d$-ball. 
\end{definition}

A pure elementary $(d+1)$-expansion combines together in a single move one \mbox{$(d+1)$}-anticollapse plus all the lower-dimensional anticollapses 
that have to be performed first. Hence a sequence of pure elementary expansions  and elementary collapses can be rewritten as a formal deformation. 
Whenever we run out of collapsing steps, we perform exactly one pure elementary $(d+1)$-expansion, and then switch back to elementary collapses.  
When the complex is reduced to a  point, we stop.

\section{Implementation of Random Simple-Homotopy}
\label{sec:implementation}

\begin{algorithm}[th]\DontPrintSemicolon
  \caption*{Random Simple-Homotopy}
  \label{algo}
  \KwIn{simplicial complex $K$}
  \KwOut{simplified simplicial complex}
  \While {$\dim(K) \neq 0$ \textbf{and} $i < $max\textunderscore step}{
        \While {$K$ has free faces} {
            \textbf{(C)}: perform a random elementary collapse
            }
        \eIf {$\dim(K) =d \neq 0$  \textbf{and} there are induced pure $d$-balls on $d+2$ vertices} {
            \textbf{(E)}: perform a random pure elementary $(d+1)$-expansion \;
            \textbf{(CC)}: perform an elementary collapse deleting the newly added \\
            \mbox{}\hspace{10.625mm} $(d+1)$-face and one of its $d$-faces that was already in $K$ \;
            }
        {\textbf{(S)}: perform (E) + (CC) on a $d$-facet with $d+1$ vertices\\
            i++ \;
            }
  }
  \Return $K$ \;
  \end{algorithm}

Algorithm RSHT provides a description of the Random Simple-Homotopy procedure in pseudocode. The actual implementation can be found on GitHub at \cite{RHExt} as a \texttt{polymake} \cite{polymake:2000} extension.
It is based on the two different types of basic operations: random \emph{collapses} (C) and random \emph{pure elementary expansions} (E) 
plus collapsing steps (CC) that ensure that a pure elementary expansion is not undone immediately by the next regular
collapsing step (C). The step (S) allows facet \emph{subdivisions} in case no other pure elementary expansions are available.

Random collapses (C) are discussed as part of Random Discrete Morse Theory in \cite{benedetti2014random}.
A fast implementation of random collapses in \texttt{polymake} is described in \cite{joswig2014}.
Hence, it remains to implement random pure elementary expansions (E).

While collapses in \texttt{polymake} can be carried out fast in the Hasse diagram of $K$, there is no explicit implementation in \texttt{polymake} 
to expand the Hasse diagram of $K$ to include the faces of a $(d+1)$-simplex $\sigma$ that is added in a pure elementary expansion.
Thus, for every pure expansion we recompute the Hasse diagram for $K+\sigma$ and then proceed with random collapses 
in the new Hasse diagram of $K+\sigma$. For various input examples of non-collapsible, contractible complexes, 
relatively few pure expansions are needed (see Sections~\ref{sec:Dunce-Bing} and \ref{sec:substructures}); 
thus the extra cost for recomputing Hasse diagrams stays low.

\begin{remark}
Pure elementary expansions are not the only operations to modify a given complex $K$ by expanding it. Another more general possibility 
would be to glue additional $(d+1)$-simplices to $K$ along induced contractible subcomplexes (of mixed dimension).  
This provides more options to modify~$K$, but at the price of having to check subcomplexes for contractibility.
As we experienced after running various experiments, this seems expensive without any advantage. 
We therefore decided to stick to pure elementary expansions. In fact, checking whether an induced subcomplex 
on $d+2$ vertices is a pure $d$-ball is very fast: It can be achieved by a lookup in the Hasse diagram.
\end{remark}

\begin{remark}
By Whitehead's Theorem~\ref{thm:Whitehead}, we might be forced to first go up by two dimensions (and not just by one
as we do in Algorithm RSHT) to find a formal deformation from a complex $K$ to some homotopy equivalent complex $L$.
This could be incorporated in the algorithm by performing not only single pure elementary $(d+1)$-expansions
followed immediately by collapses, but by allowing sequences of pure elementary $(d+1)$-expansions followed
by pure elementary $(d+2)$-expansions before switching back to collapses. In principle, this generalized procedure 
could be set up in a simulated annealing fashion, in a completely analogous way to what we do here; but for the examples we study in the subsequent 
Sections~\ref{sec:Dunce-Bing} and \ref{sec:substructures}, we shall restrict ourselves to the basic algorithm RSHT, as this already works well.
\end{remark}

\section{Bistellar flips and artefacts}
\label{sec:bistellar}

Pure elementary $(d+1)$-expansions have (at least for $d$-manifolds in low dimensions $d\leq 6$) a clear interpretation in  terms of bistellar flips. 
In fact, let $K$ be a $d$-complex. In a pure elementary $(d+1)$-expansion, 
some $(d+1)$-simplex $\sigma$ is glued to $K$ along a $d$-ball consisting of $1\leq k\leq d+1$ of the $d$-faces of $\sigma$; 
let $r$ be the intersection of these $k$ faces. If $r$ is contained in no further $d$-face of $K$, then after adding $\sigma$, collapsing it away 
with one of the $k$ $d$-faces, and collapsing further lower-dimensional faces, we are left with a complex $K'$ that is obtained from $K$ 
via a bistellar move; cf.~\cite{bjorner2000}. If instead $r$ is contained in more than $k$ $d$-faces of $K$, 
then in passing from $K$ to $K'$ the  facet degree of $r$ is decreased by one. 

\begin{figure}[!t]
\centering
 \includegraphics[width=0.6\linewidth]{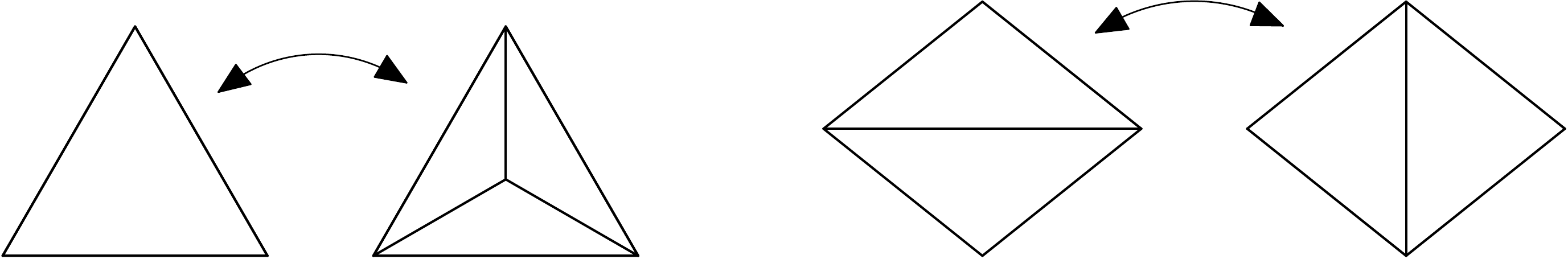}
 \captionsetup{width=.85\linewidth}
 \caption{Expansions as bistellar flips.} 
 \label{fig:Bist}
\end{figure}

\begin{example} \label{ex:bistellar}
If we glue a tetrahedron $\sigma$ to a $2$-complex $K$ along a $2$-disk in $\partial \sigma$, the disk can either consist of 1, 2, or 3 triangles.
In the first case, the complex $K'$ resulting after the collapses is a subdivision of $K$. (The triangle $\tau$ of $K$ is subdivided 
using the unique vertex of $\sigma$ not in~$K$; see Figure~\ref{fig:Bist}, left.) In the second case, if $r$ is the edge common 
to the two triangles of $\partial \sigma$ in which $\sigma$ intersects $K$ and $r$ is contained in exactly these two triangles of $K$,  
then $r$ is flipped to yield $K'$; see Figure~\ref{fig:Bist}. In the third case, the transition from $K$ to $K'$ ``undoes'' a subdivision. 
\end{example}

\begin{figure}
\centering
  \includegraphics[width=4.3cm]{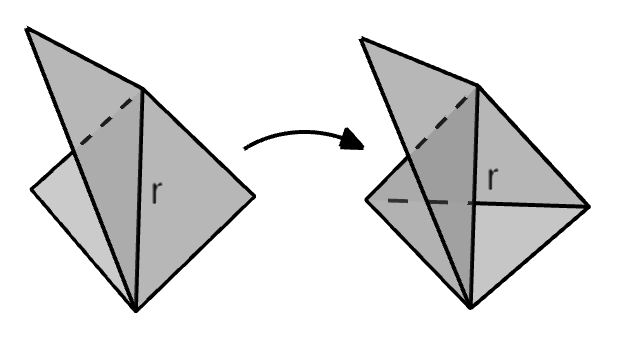}
 \captionsetup{width=.85\linewidth}
 \caption{Reduction of the face degree.}
 \label{fig:PureExp}
\end{figure}

\begin{example} \label{ex:reduction_degree}
Let $K$ be $2$-dimensional and let $\sigma$ be a tetrahedron glued to $K$ along two triangles whose intersection is $r$, 
and suppose that this $r$ is contained in exactly three triangles of $K$.  Then after the addition of $\sigma$ and its removal, 
$r$ will be contained in two of the triangles of~$K'$; see Figure~\ref{fig:PureExp}.
\end{example}

If, in a pure elementary $3$-expansion, some tetrahedron is glued on top of two adjacent triangles $\varrho_1, \varrho_2$ of a triangulated $2$-manifold, 
then, after collapsing away the tetrahedron  together with  $\varrho_1$, the resulting triangulation will still contain $\varrho_2$
and (as a free face) the edge $e = \varrho_1 \cap \varrho_2$. This edge $e$ is thus the only free ($1$-)face; hence, it will be selected
in the incoming (C) step of RSHT. As a result, the combination (E) + (CC) + (C) is a proper bistellar flip---and the diagonal of the two initial triangles gets flipped. 
In the case of a subdivision, the combination (E) + (CC) is a proper bistellar flip as well. Thus, it remains to inspect the case when a subdivision is undone. 
After the addition of a tetrahedron (E) and the deletion of one of the initial three triangles along with the tetrahedron in the (CC) step, 
the other two initial triangles remain, and we have (two) free edges for two further (C) steps. In contrast to the previous cases,  after the two (C) steps, 
the resulting triangulation is not a surface yet---as we still have the intersection vertex of the three initial triangles
as a free vertex that is connected to the modified triangulated surface by an edge. That is, the result of (E) + (CC) + (C) + (C)
is a triangulated surface with an additional edge sticking out. This edge is then collapsed away in another (C) step.

This situation generalizes as follows:

\begin{lemma}\label{lem:BF}
Let $K$ be a triangulation (without free faces) of a $d$-manifold~$M$ and suppose that the $(d+1)$-simplex $\sigma=[0,1,\dots,d+1]$ intersects $K$
 in a pure $d$-ball $B$ with $1\leq k \leq d+1$ $d$-facets on the $d+2$ vertices $0, 1, \dots, d+1$ of $\sigma$ 
so that (w.l.o.g.)\ $B=\mbox{$[0,1,\dots,d-k+1]$}*\partial [d-k+2,d-k+1,\dots,d+1]$. We add $\sigma$ (and its faces) to~$K$ and, by step (CC) of RSHT, 
ban those facets of $\sigma$ as free faces that do not contain $[0,1,\dots,d-k+1]$. 
\begin{compactitem}
\item If $k\leq 7$,  then running RSHT on $K\cup\sigma$ until no further free faces are available yields a triangulation $K'=K-B+B'$ of $M$ 
with \[B'=\partial[0,1,\dots,d-k+1]*\mbox{$[d-k+2,d-k+1,\dots,d+1]$},\] i.e., $K'$ is obtained from $K$ by a bistellar flip. 
\item If $k > 7$ (which can occur for $d>6$ only), then running RSHT on $K\cup\sigma$ until no further free faces are available 
might terminate in a non-pure simplicial complex $K''$ that is the union of a triangulation of $M$ with 
a contractible, but non-collapsible lower-dimensional complex on the vertices $d-k+2,d-k+1,\dots,d+1$.
\end{compactitem}
\end{lemma}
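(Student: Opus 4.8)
The plan is to show that, after the forced (CC) step, the action of RSHT on $K\cup\sigma$ is nothing but a sequence of elementary collapses of the single simplex $\Delta_C=\Delta^{k-1}$, and then to read off the dichotomy from whether that simplex can always be made to collapse to a point. Write $A=\{0,1,\dots,d-k+1\}$ and $C=\{d-k+2,\dots,d+1\}$, so that $B=\Delta_A*\partial\Delta_C$, $B'=\partial\Delta_A*\Delta_C$, $\partial\sigma=B\cup B'$ and $\sigma=\Delta_{A\cup C}$. The faces that must disappear to pass from $K\cup\sigma$ to $K'=K-B+B'$ are exactly the faces that contain $A$, i.e.\ the interval $P=\{A\cup\gamma:\gamma\subseteq C\}$; as a poset $P$ is the Boolean lattice $2^C$, equivalently the face poset of $\Delta_C$ with a bottom element adjoined (corresponding to $A$). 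Since $K$ has no free faces, $M$ is closed, so $\mathrm{lk}_K(A)$ is a triangulated $(k-2)$-sphere; it contains the $(k-2)$-sphere $\partial\Delta_C$ as a subcomplex, and a triangulated sphere cannot properly contain a triangulated sphere of the same dimension, so $\mathrm{lk}_K(A)=\partial\Delta_C$. Hence the only faces of $K$, and thus of $K\cup\sigma$, that contain $A$ are the members of $P$, and every coface of a member of $P$ again contains $A$, so deleting faces of $P$ never forces a move involving a face outside $P$.

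The key step for the first item is that, once RSHT has performed (CC) --- which deletes $\sigma$ together with one of the $k$ facets $A\cup(C\setminus\{c\})$ of $B$, i.e.\ collapses $\Delta_C$ through a facet --- every free face it can subsequently select lies in $P$. I would prove that, while only members of $P$ are being deleted, no face $F$ with $A\not\subseteq F$ ever becomes free: a coface of $F$ not containing $A$ is never deleted, and among these are all external cofaces of $F$ in $K$ (e.g.\ the second facet of $K$ through $F$ when $F$ is a ridge) and, when $V(F)\subseteq A\cup C$, all supersets $\alpha'\cup\gamma'$ with $\alpha'\subsetneq A$; a short count shows there are always at least two such, except in the two special cases $F=(A\setminus\{x\})\cup C$ (which becomes maximal, hence not free, once $\sigma$ is gone) and $F=(A\setminus\{x\})\cup(C\setminus\{c\})$, where the never-deleted facet $(A\setminus\{x\})\cup C$ of $B'$ and the external facet of $K$ through $F$ again leave $F$ with two cofaces. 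So RSHT only ever collapses pairs inside $P$; and for $A\cup\gamma\in P$, being free in $K\cup\sigma$ means exactly that $\gamma$ is free in the subcomplex of $\Delta_C$ still present (the bottom element $A$ being removable only when a single vertex of $\Delta_C$ remains). Thus RSHT performs, step by step, a sequence of elementary collapses of $\Delta_C$ starting with the collapse of its top face. Since bistellar flips preserve the class of closed-manifold triangulations, $K'$ has no free faces; therefore RSHT stops precisely when $P$ has been emptied, and at that moment $K\cup\sigma$ has become $K'$.

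It remains to decide when this induced collapsing of $\Delta_C$ is bound to finish, and this is where $k=|C|$ enters. Any complex obtained from $\Delta_C=\Delta^{k-1}$ by collapses is contractible, and it is known (this is the reason for the bound $7$) that every contractible simplicial complex on at most $7$ vertices is collapsible --- the vertex-minimal contractible non-collapsible complex, a triangulated Dunce hat, has $8$ vertices. So for $k\le 7$ every maximal collapsing sequence of $\Delta_C$ ends at a vertex; RSHT is forced to remove all of $P$ and returns $K'=K-B+B'$, the asserted bistellar flip. For $k\ge 8$ --- which forces $d\ge k-1\ge 7$, explaining the restriction to $d>6$ --- a triangulated Dunce hat $D$ sits inside $\Delta_C$ as a subcomplex and $\Delta_C$ collapses onto $D$; prepending (CC) and carrying out the corresponding collapses inside $K\cup\sigma$ as above, RSHT can reach the complex $K''$ in which $P$ has been reduced to $\{A\cup\gamma:\gamma\in D\}\cup\{A\}$. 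Then $A\cup\gamma$ is free in $K''$ if and only if $\gamma$ is free in $D$, so $K''$ has no free faces; yet $K''$ is the union of the manifold triangulation $K'$ with the cone $\Delta_A*D$ over a non-collapsible $2$-complex, glued along the contractible subcomplex $\partial\Delta_A*D$, hence non-pure and of the same simple-homotopy type as $M$ --- so RSHT halts.

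I expect the main obstacle to be the bookkeeping in the second paragraph: one must verify carefully, case by case on the type of $F$, that deleting members of $P$ never makes a face outside $P$ free --- in particular that the extra free ridges of $K$ created by the (CC) step are harmless --- and set up the identification ``RSHT restricted to $P$ $=$ a collapse of $\Delta_C$'' precisely enough to transfer freeness back and forth. Once that is in place, the dichotomy and the constant $7$ follow from the cited fact about small contractible complexes, together with the (classical) existence, for $k\ge 8$, of a collapse of $\Delta_C$ onto a Dunce hat.
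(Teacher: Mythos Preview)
Your proof is correct and follows the same approach as the paper's: reduce the RSHT-collapsing of $K\cup\sigma$ to a collapsing sequence of the simplex $\Delta_C$ on $k$ vertices, then invoke the Bagchi--Datta result that every contractible complex on at most seven vertices is collapsible (and, for $k\ge 8$, the existence of collapses of $\Delta^{k-1}$ onto a Dunce hat). Your treatment is considerably more detailed---you spell out why $\mathrm{lk}_K(A)=\partial\Delta_C$ and give the case analysis showing no face outside $P$ can become free, both of which the paper compresses into the single parenthetical ``because our starting complex $K$ had no free faces''---but the strategy is identical.
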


\begin{proof}
Step (CC) of RSHT implies that our first collapsing move will remove a facet of $B$ along with the added $(d+1)$-simplex $\sigma$. 
At any consecutive collapsing step~(C), the faces involved in the collapses will be of the form $[0,1,\dots,d-k+1]* \tau$, 
where $\tau \in \partial [d-k+2,d-k+1,\dots,d+1]$ (because our starting complex $K$ had no free faces). 
The restriction of the collapsing sequence to $\partial [d-k+2,d-k+1,\dots,d+1]$ gives us  a valid collapsing sequence 
of the simplex $[d-k+2,d-k+1,\dots,d+1]$, where the first collapsing move is induced by the initial step (CC). Now:
\begin{compactitem}
\item If $k\leq 7$, the simplex $[d-k+2,d-k+1,\dots,d+1]$ has at most seven vertices; and by \cite{bagchi2005combinatorial}  
every contractible simplicial complex with $k\leq 7$ vertices is collapsible, i.e., the collapsing sequence induced by RSHT on $[d-k+2,d-k+1,\dots,d+1]$ 
will terminate at a single point. It follows that $K'=K-B+B'$.
\item If $k>7$, then the collapsing sequence on $[d-k+2,d-k+1,\dots,d+1]$ might get stuck on a contractible,
but non-collapsible subcomplex of dimension at least two  \cite{Crowley,lofano2019}, 
and thus the resulting complex $K''$ need not be pure. \qedhere
\end{compactitem} 
\end{proof}

Note that in the special case when $d=7$ and $k=8$ we might get stuck on a subcomplex $[0]*D\subseteq [0]*\partial[1,\dots,8]$,
with $D$ an $8$-vertex triangulation of the Dunce Hat; cf.\ \cite{benedetti2009dunce}. The resulting complex $K'=K-B+B'+[0]*D$
then deviates from the modification via a bistellar flip, $K-B+B'$, by the additional cone $[0]*D$ with apex $[0]$
over the (contractible) Dunce Hat $D$ in the $2$-skeleton of $\partial[1,\dots,8]$. The complex $K'=K-B+B'+[0]*D$
deformation retracts to $K-B+B'$, but has no free faces. 

\begin{theorem}[Reduction of pure elementary $(d+1)$-expansions to bistellar flips] \label{thm:bistellarflips}
Let $K$ be a triangulation of a $d$-manifold $M$ with $d\leq 6$. Any pure elementary $(d+1)$-expansion 
followed by collapses (as long as free faces are available) induces a bistellar flip on $K$.
\end{theorem}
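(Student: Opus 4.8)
The plan is to obtain the theorem as a direct specialization of Lemma~\ref{lem:BF}: for $d\le 6$ every pure elementary $(d+1)$-expansion falls into the first, favorable bullet of that lemma. The two things to check are that a pure elementary expansion of a manifold triangulation always has the normalized shape assumed there, and that for manifolds one genuinely lands in the bistellar-flip regime, not the facet-degree-reduction regime flagged just before the lemma.

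First I would unwind the combinatorics of the expansion. Since RSHT performs a pure elementary $(d+1)$-expansion only when $K$ has no free faces, we may assume this, so that $K$ meets the hypotheses of Lemma~\ref{lem:BF}. If the $(d+1)$-simplex $\sigma$ is glued to $K$ along a $d$-ball $B$, then $B\subseteq\partial\sigma\cong S^d$, and, being a $d$-ball, $B$ is the union of $k$ of the $d+2$ facets of $\partial\sigma$ with $1\le k\le d+1$ (it cannot be all $d+2$, since $\partial\sigma$ is a sphere). Let $V_S$ be the set of the $k$ vertices of $\sigma$ ``omitted'' by these facets and let $r$ be the face of $\sigma$ spanned by the remaining $d+2-k$ vertices. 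A one-line check shows that a face of $\sigma$ lies in $B$ exactly when it does not contain all of $V_S$, i.e.\ $B=r*\partial[V_S]$, where $[V_S]$ is the simplex on $V_S$; after relabeling vertices this is precisely $[0,1,\dots,d-k+1]*\partial[d-k+2,\dots,d+1]$, which is automatically a $d$-ball (a simplex joined with a sphere), with complementary $d$-ball $B'=\partial[0,1,\dots,d-k+1]*[d-k+2,\dots,d+1]$ in $\partial\sigma$ satisfying $\partial B'=\partial B$. Since $\sigma\cap K=B$ is a ball and not all of $\partial\sigma$, the interior of $B'$ is not already in $K$, so replacing $B$ by $B'$ is a genuine move.

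Now the hypothesis $d\le 6$ forces $k\le d+1\le 7$, so the simplex $[d-k+2,\dots,d+1]$ has at most seven vertices and the first bullet of Lemma~\ref{lem:BF} applies: the (CC) step removes $\sigma$ together with one facet of $B$, every later collapse involves a pair of faces of the form $r*\tau$ (because $K$ started with no free faces), the restriction to $[d-k+2,\dots,d+1]$ is a collapsing sequence of an at-most-$7$-vertex simplex, and by \cite{bagchi2005combinatorial} such a sequence cannot get stuck before reaching a point. Hence $\sigma$, all facets of $B$, and all the newly created faces $r*\tau$ are collapsed away, leaving $K'=K-B+B'$, i.e.\ $K$ with a single bistellar (Pachner) move performed --- a stellar subdivision of a facet if $k=1$, its inverse if $k=d+1$, and one of the intermediate moves otherwise. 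That this is a flip and not a degree reduction is the one place where, for a general complex, the other alternative of the lemma could occur; for a manifold it does not, because the link $\mathrm{lk}_K(r)$ is a triangulated $(k-2)$-sphere --- it cannot be a $(k-2)$-ball, since the fundamental $\mathbb{Z}_2$-cycle of the $(k-2)$-sphere $\partial[d-k+2,\dots,d+1]\subseteq\mathrm{lk}_K(r)$ cannot bound in a $(k-2)$-dimensional acyclic complex --- and, being an equidimensional sphere that contains $\partial[d-k+2,\dots,d+1]$ as a subcomplex, it must equal it, so $r$ meets exactly the $k$ facets of $B$. I do not expect a real obstacle here; the single load-bearing input is the seven-vertex collapsibility theorem of \cite{bagchi2005combinatorial}, which is exactly what caps the statement at $d\le 6$.
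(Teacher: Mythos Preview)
Your proposal is correct and follows the paper's approach exactly: invoke Lemma~\ref{lem:BF} and observe that $k\le d+1\le 7$ when $d\le 6$. The paper's own proof is literally that one sentence.

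Where you differ is in the amount of detail. You spell out two points the paper leaves implicit: (i) that any $k$ facets of $\partial\sigma$ automatically assemble into the normalized ball $r*\partial[V_S]$ assumed in the lemma, and (ii) that for a closed manifold the face $r$ lies in no $d$-faces of $K$ outside $B$ (so the move is a genuine bistellar flip rather than the facet-degree reduction of Example~\ref{ex:reduction_degree}). Point (ii) is arguably a gap in the paper's proof of Lemma~\ref{lem:BF} itself rather than of the theorem---the lemma already asserts ``bistellar flip'' in its conclusion without justifying that $r$ has the right link---so your sphere-in-a-sphere argument via the $\mathbb{Z}_2$-fundamental class is a welcome addition. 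The aside about $\mathrm{lk}_K(r)$ not being a ball is unnecessary if $M$ is closed (all links are spheres), but does no harm.
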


\begin{proof}
The statement follows from Lemma \ref{lem:BF} and the fact that the maximum number of facets of a pure $d$-ball on $d+2$ vertices is $d+1$.
\end{proof}

\begin{corollary}[Manifold stability]
Let $K$ be a (not necessarily pure) simplicial complex. If we run RSHT on $K$ and at some point reach a simplicial complex $K'$
%that triangulates a $d$-manifold with $d\leq 6$, then from then on, whenever RSHT gets stuck in a complex $\tilde{K}$ without free faces, 
% $\tilde{K}$  is a $d$-manifold  bistellarly equivalent to $K'$.
that triangulates a $d$-manifold with $d\leq 6$, then from then on, whenever there are no free faces in the further run of RSHT, 
the respective temporary complex $\tilde{K}$ is a $d$-manifold as well, and $\tilde{K}$ is bistellarly equivalent to $K'$.
\end{corollary}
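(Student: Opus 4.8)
The plan is to reduce the statement, by induction, to Theorem~\ref{thm:bistellarflips}, which already does the one‑step work. I would first fix the convention that here — as in Lemma~\ref{lem:BF} and Theorem~\ref{thm:bistellarflips} — ``$d$-manifold'' means a closed one: this is essentially forced, since a triangulated manifold with boundary always has a free face (a boundary ridge lies in exactly one facet) whereas a closed manifold triangulation has none, and since bistellar equivalence only compares triangulations of the same closed PL manifold.

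The key claim to establish describes what RSHT does starting from a temporary complex $\tilde K$ that is a closed $d$-manifold with $1\le d\le 6$. (If $d=0$ the outer \texttt{while} guard $\dim(K)\neq 0$ fails immediately and $\tilde K=K'$, so there is nothing to prove.) Because $\tilde K$ has no free faces, the inner collapse loop does nothing and RSHT passes to the if/else test. In the ``if'' branch it performs a pure elementary $(d+1)$-expansion (E), then step (CC), and in the following outer‑loop iterations the inner loop collapses the free faces created by $\sigma$ until none remain. In the ``else'' branch (S) it performs the pure elementary $(d+1)$-expansion with $k=1$ onto a single facet $\varrho$, followed by the (CC) collapse of $\sigma$ together with $\varrho$; I would check directly that no free faces remain afterwards — each ridge of $\varrho$ then lies in one facet of the new cone $v\ast\partial\varrho$ (with $v$ the new vertex) and in the one other facet of $\tilde K$ that had contained it — so the net effect is the stellar subdivision of $\varrho$. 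In both branches the whole macro‑step is exactly ``a pure elementary $(d+1)$-expansion followed by collapses as long as free faces are available'', with $\tilde K$ a closed $d$-manifold and $d\le 6$; hence Theorem~\ref{thm:bistellarflips} applies and turns $\tilde K$ into a complex $\tilde K^{+}$ obtained from $\tilde K$ by a single bistellar flip. In particular $\tilde K^{+}$ is again a closed $d$-manifold, so it is again free of free faces and coincides with the next top‑of‑loop complex.

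The corollary then follows by induction on the number of macro‑steps performed after RSHT first reaches the closed $d$-manifold $K'$. Since $K'$ itself has no free faces, the first ``no‑free‑faces'' temporary complex from that point on is $K'$, trivially bistellarly equivalent to $K'$; and by the key claim each later ``no‑free‑faces'' temporary complex is a closed $d$-manifold obtained from its predecessor by a bistellar flip, hence, by transitivity, a closed $d$-manifold bistellarly equivalent to $K'$. All complexes appearing strictly between two such moments do carry free faces — this is how RSHT is structured: an (E) or (S) step is taken only when there are none, it creates some, and then the inner loop collapses until none are left — so they are not among the $\tilde K$ the statement concerns. Termination of the run once the step counter reaches \texttt{max\_step} affects none of the $\tilde K$ already produced.

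I do not expect a serious obstacle: the whole thing is bookkeeping over Theorem~\ref{thm:bistellarflips}. The two points that need genuine care are (i) recognizing the ``else'' branch (S) as the $k=1$ pure elementary expansion and verifying that the (CC) collapse leaves no free faces, so that the resulting move really is a bistellar $0$-move covered by the theorem; and (ii) checking that the hypotheses ``closed $d$-manifold'' and ``$d\le 6$'' feeding Theorem~\ref{thm:bistellarflips} persist after each macro‑step — which they do, since a bistellar flip on a closed $d$-manifold again yields a closed $d$-manifold — so that the induction can be iterated without interruption.
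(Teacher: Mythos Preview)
Your proposal is correct and follows exactly the intended route: the paper states the corollary with no proof, treating it as immediate from Theorem~\ref{thm:bistellarflips}, and what you have written is precisely the natural spelling-out of that implication by induction on macro-steps. Your care with the two points you flag---recognising (S) as the $k=1$ case and checking that the closed-manifold hypothesis persists under a bistellar flip---is appropriate and the arguments you give for both are sound.
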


To avoid lower-dimensional artefacts $[0,1,\dots,d-k+1]*N\subseteq [0,1,\dots,d-k+1]*\partial [d-k+2,d-k+1,\dots,d+1]$
in the modification $K'=K-B+B'+[0,1,\dots,d-k+1]*N$ of a triangulated manifold $K$, involving a contractible, non-collapsible complex $N$
for $d\geq 7$ and $k\geq 8$, we should switch to bistellar flips $K'=K-B+B'$ once we know that $K$ is a manifold. Quite often, 
this is not clear a priori---in fact, testing whether $K$ is a manifold is an undecidable problem for $d\geq 6$; cf.\ \cite{joswig2014}. 

In practice  \cite{joswig2014}, on a $7$-simplex it is nearly impossible to get stuck with random collapses. On the $8$-simplex,
only about 0.0000012\% of the runs of random collapses get stuck. But in higher dimensions, the situation changes dramatically: For example, for the $25$-simplex, 
contractible but non-collapsible substructures are encountered in 92\% of the runs.

Another option to deal with the artefacts would be to run RSHT on lower-dimensional parts to ``melt away'' the artefacts. 
However, in our experiments in Sections~\ref{sec:Dunce-Bing} and~\ref{sec:substructures} we only focus 
on top-dimensional pure elementary expansions, since the terminal triangulations of the examples we consider are all of dimension $d\leq 6$.

In case a general complex $K$ has no free faces and is not a manifold, then a sequence (E) + (CC) + (C) + \dots + (C) 
until no further collapses are possible might reduce $K$ in dimension or can reduce (or increase) the degree of a face in $K$,
as we have seen in Example~\ref{ex:reduction_degree} and Figure~\ref{fig:PureExp}. In the latter case,
we can regard the sequence as a \emph{generalized bistellar flip}. These generalized operations give flexibility 
in the modification of a given complex $K$.

 \subsection{Selection of expansions and simplification of complexes}

We next discuss in more detail how the pure elementary expansions are selected and why Algorithm RSHT has a tendency to simplify simplicial complexes
to yield small or even vertex-minimal triangulations.  First, we note that RSHT, apart from temporarily adding \mbox{$(d+1)$}-faces 
in the pure elementary expansion steps (E),  never  increases the dimension of the complex. 

As outlined in the introduction, for any $d$-facet of a $d$-dimensional complex $K$, chosen uniformly at random,
we can check for all neighboring $d$-facets whether the induced subcomplexes on the combined $d+2$ vertices are pure $d$-dimensional.
From the collection of all available such pure induced $d$-balls on $d+2$ vertices, we pick one uniformly at random for a pure elementary
$d$-expansion step~(E). However, in general, such pure induced $d$-balls on $d+2$ vertices need not exist.
For example, in the case of neighborly triangulations of surfaces, the induced subcomplexes on the four vertices of two adjacent
triangles are the two triangles \emph{plus} the opposite diagonal edge; such subcomplexes are not contractible.
In such a case, the only possible pure elementary expansion is by picking a facet (uniformly at random) 
as a pure $d$-ball and initiating a subdivision~(S). An example of a triangulated $3$-sphere on $16$ vertices
that allows no bistellar flips (apart from subdivisions of tetrahedra) is given in \cite{dougherty2004}.

\enlargethispage{2mm}

\begin{lemma}
Let $K$ be a triangulated circle $S^1$ with $n>3$ vertices. Then $K$ is reduced by Algorithm RSHT to the boundary
of a triangle in $n-3$ pure elementary expansion steps~(E), each followed by two collapsing steps (CC) + (C).
\end{lemma}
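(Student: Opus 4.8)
The plan is to track what RSHT does to a triangulated circle $S^1$ with $n > 3$ vertices and show that each ``(E) + (CC) + (C)'' cycle removes exactly one vertex while keeping the complex a triangulated circle, until we reach $\partial \Delta^2$. First I would observe that a triangulation $K$ of $S^1$ with $n>3$ vertices has no free faces (every vertex lies in two edges, every edge lies in no higher face and is the face of the two edges at its endpoints—wait, edges are the top faces, so freeness must be checked at the vertex level: each vertex is contained in exactly two edges, hence is not free). So RSHT enters the outer \texttt{while} loop, finds no free faces, and looks for induced pure $1$-balls on $d+2 = 3$ vertices. Two adjacent edges $\varrho_1 = [a,b]$, $\varrho_2=[b,c]$ of $S^1$ span three distinct vertices $a,b,c$ (distinct since $n>3$), and since $K$ is a circle with more than three vertices, the edge $[a,c]$ is \emph{not} in $K$; therefore the induced subcomplex on $\{a,b,c\}$ is exactly the path $[a,b]\cup[b,c]$, which is a pure $1$-ball. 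Hence a genuine pure elementary $2$-expansion (E) is always available—we never fall through to the subdivision step (S).

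Next I would apply Lemma~\ref{lem:BF} with $d = 1$. Here the ball $B = \varrho_1 \cup \varrho_2$ consists of $k = 2$ facets, so $k \le 7$ and the lemma applies: adding the triangle $\sigma = [a,b,c]$, then performing (CC) (collapse $\sigma$ with one of the two facets containing the apex vertex—in the $d=1$ notation of the lemma, the facets \emph{not} containing $[0,\dots,d-k+1]=[b]$ are banned, so we collapse $\sigma$ together with $[a,c]$... let me restate in the concrete picture), RSHT returns the triangulation $K' = K - B + B'$ where $B' = \partial[a,b,c]$-style flip replaces the path $a$–$b$–$c$ by the single edge $[a,c]$. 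Concretely: $\sigma=[a,b,c]$ is glued along $[a,b]\cup[b,c]$; (CC) deletes $\sigma$ and the new edge $[a,c]$ is... no—(CC) must delete $\sigma$ together with a $d$-face already present, i.e.\ $[a,b]$ or $[b,c]$; say $[a,b]$. Then $b$ becomes a free vertex (it now lies only in $[b,c]$), so the (C) step collapses $[b,c]$ with $b$, leaving the edge $[a,c]$ in place of the two old edges. The result $K'$ is again a triangulated $S^1$, now on $n-1$ vertices, and is exactly the bistellar flip promised by Lemma~\ref{lem:BF} / Theorem~\ref{thm:bistellarflips}. I would note the single subtlety flagged in Lemma~\ref{lem:BF}: its conclusion $K' = K-B+B'$ requires that the ridge $r$ (here the vertex $b$) lie in no further top face of $K$; since $b$ is a vertex of a circle it lies in exactly the two edges $[a,b],[b,c]$, so there is no extra degree to burn off, and we land cleanly on the flipped circle without leftover lower-dimensional debris.

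Finally I would close by induction on $n$. The base case $n=3$ is $\partial\Delta^2$ itself, which has no free faces and no available pure $1$-ball on three vertices (the induced complex on the three vertices is all of $\partial\Delta^2$, not a $1$-ball), so RSHT halts—this is the terminal triangulation. For $n>3$, one full pass of the outer loop performs exactly one (E), one (CC), and one (C), producing a triangulated $S^1$ on $n-1$ vertices; by the inductive hypothesis the remaining run uses $(n-1)-3 = n-4$ further (E) steps, for a total of $n-3$ pure elementary expansions, each followed by a (CC) and a single (C), as claimed. The main thing to get right—more a bookkeeping point than a genuine obstacle—is verifying that after (CC) the circle has precisely one free face (the vertex $b$), so that the mandated single (C) step suffices to return to a triangulated circle before the loop restarts; this is immediate because collapsing $[a,b]$ with $\sigma$ leaves $b$ in a unique edge while every other vertex retains its two edges, and deleting $[b,c]$ together with $b$ restores the $2$-regularity of the link of every vertex. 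Invoking Theorem~\ref{thm:bistellarflips} (or directly the $k\le 7$ case of Lemma~\ref{lem:BF}) lets us skip re-deriving the flip by hand.
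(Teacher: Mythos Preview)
Your argument is correct in substance, and since the paper states this lemma without proof there is nothing to compare against: you have supplied what the authors omitted. The key observations---no free faces on a circle, every pair of adjacent edges spans an induced pure $1$-ball on three vertices because $n>3$ forces $[a,c]\notin K$, and (E)+(CC)+(C) is precisely the bistellar contraction replacing the path $a$--$b$--$c$ by the edge $[a,c]$---are all right, and the check that exactly one (C) step occurs (only $b$ becomes free) is what makes the count of two collapsing steps per expansion honest.

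One small inaccuracy worth fixing: you claim that at $n=3$ ``RSHT halts---this is the terminal triangulation.'' It does not halt. The outer \texttt{while} condition $\dim(K)\neq 0$ is still satisfied, and since no induced pure $1$-ball on three vertices exists, the algorithm falls through to the \texttt{else} branch (S), subdivides an edge, and bounces back to $n=4$; it will then oscillate between $n=3$ and $n=4$ until \texttt{max\_step} is exhausted. This does not damage the lemma, which only asserts that $\partial\Delta^2$ is \emph{reached} after $n-3$ expansions, not that the algorithm terminates there. So your induction base should simply read ``at $n=3$ we are already at $\partial\Delta^2$, requiring zero (E) steps,'' and you should drop the halting claim. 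Also, the stream-of-consciousness self-corrections (``wait\ldots no\ldots let me restate'') should be cleaned up in a final write-up; the eventual description of (CC) deleting $\sigma$ together with an \emph{old} edge, say $[a,b]$, is the correct one.
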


In the case of triangulations of $S^2$ with $n>4$ vertices, there always are admissible edge flips, and thus Algorithm RSHT never adds 
a vertex in a subdivision step (S). A~vertex can get removed in the reversal of a subdivision once the current triangulation has a vertex of degree $3$.
However, the boundary of the octahedron has all of its vertices of degree $4$; in fact, there are infinitely many triangulations of $S^2$ 
with all vertex degrees at least four. In any such example, the removal of a vertex is not immediately possible. 
But after a suitably long sequence  of random edge flips, eventually vertices of degree $3$ show up, and the three incident triangles 
to such a vertex have the chance to get chosen for an induced pure $2$-ball to remove the vertex of degree $3$.

Similarly, general complexes $K$ are simplified and reduced in size by collapsing away collapsible parts and by reversing subdivisions 
to reduce the number of vertices---but without a universal guarantee for success (as contractibility is undecidable).

\section{Classical examples}
\label{sec:Dunce-Bing}

In this section, we test how the Algorithm RSHT performs on the Dunce Hat, on Bing's House with two rooms, 
and on similar, ``classical'' examples of contractible complexes. 
It turns out that the number of pure elementary expansions needed to reduce these complexes to a single vertex is conveniently low: one pure elementary expansion suffices 
for an $8$-vertex triangulation of the Dunce Hat; five pure elementary expansions suffice for a simplicial version of Bing's house with two rooms;  
and in general, six tetrahedra are sufficient to collapse Bing's house with $k$ rooms (\textbf{Theorem \ref{thm:Bingk}}). 
%We do not know whether these numbers are smallest possible. (Perhaps even four expansions might suffice: We do not know.)
Triangulations of these examples can be found online at the ``Library of Triangulations'' \cite{library}.

\begin{figure}[!htb]
    \centering
    \begin{subfigure}{.31\textwidth}
        \centering
        \includegraphics[width=\linewidth]{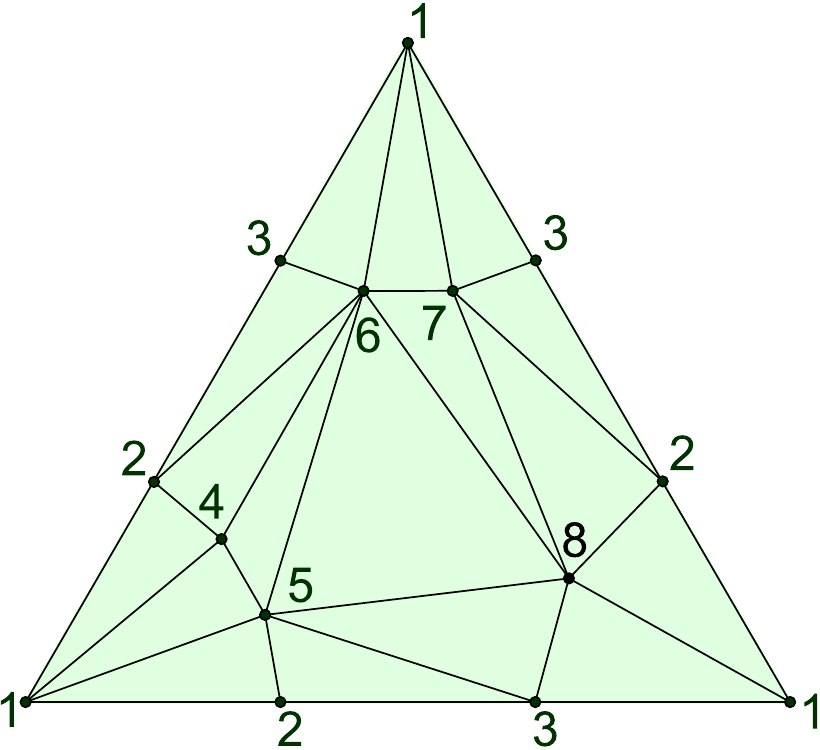}
        \caption{An $8$-vertex triangulation of the Dunce Hat}
        \label{DH:1}
    \end{subfigure}%
    \hspace{0.02\textwidth}
    \begin{subfigure}{.31\textwidth}
        \centering
        \includegraphics[width=\linewidth]{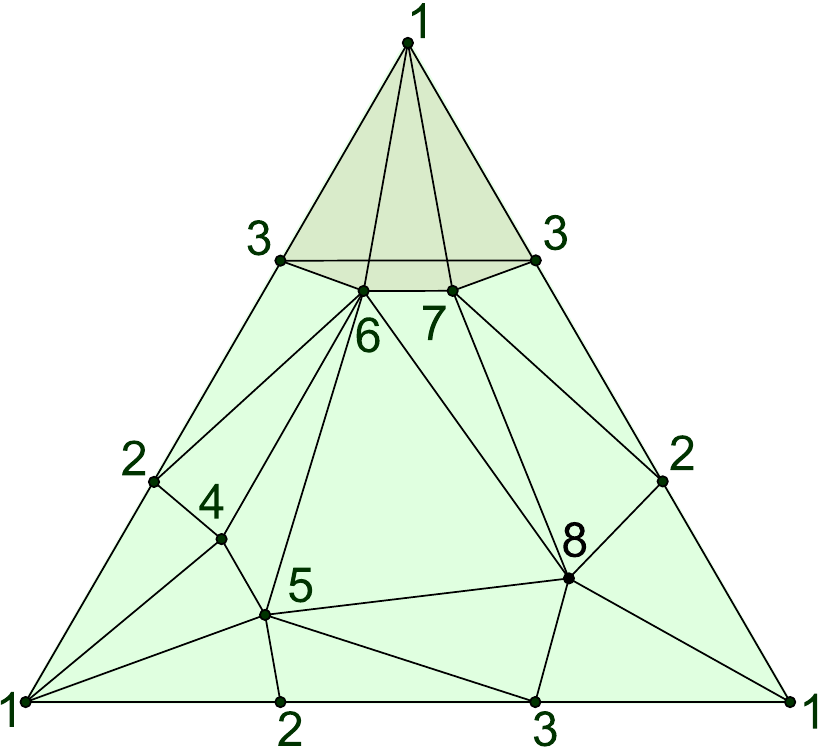}
        \caption{Anticollapsing the tetrahedron 1367.}
        \label{DH:2}
    \end{subfigure}%
    \hspace{0.02\textwidth}
    \begin{subfigure}{.31\textwidth}
        \centering
        \includegraphics[width=\linewidth]{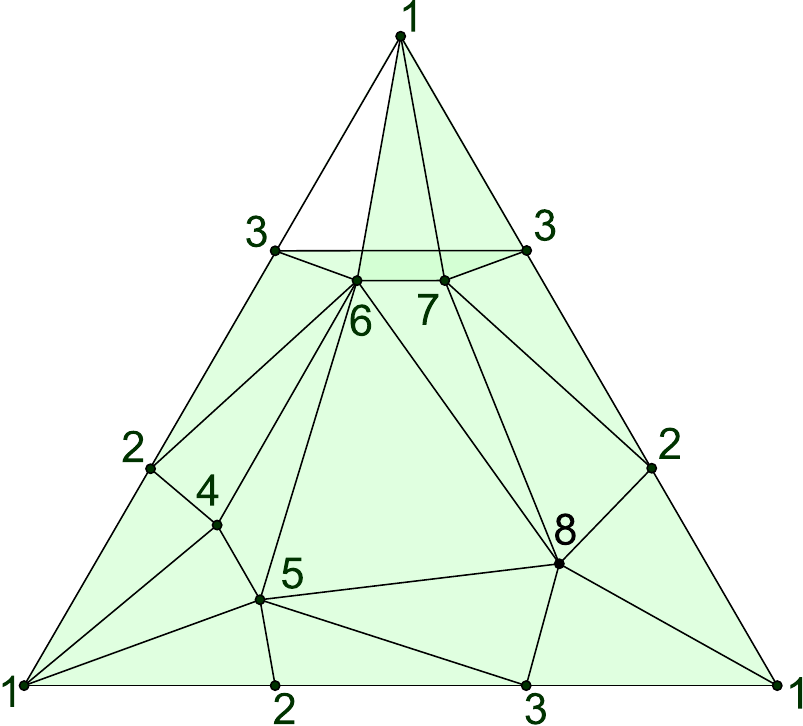}
        \caption{Collapsing away the tetrahedron 1367.}
        \label{DH:3}
    \end{subfigure}
    \caption{A formal deformation of the Dunce Hat.}
\end{figure}

\subsection{The Dunce Hat}

The Dunce Hat \cite{Zeeman64Dunce} is the most famous example of a contractible, but non-collapsible complex; compare \cite{benedetti2009dunce}. 
It is obtained by glueing together the three edges of a single triangle in a non-coherent way. The Dunce Hat can be triangulated as a simplicial complex 
with eight vertices (see Figure~\ref{DH:1}); and eight vertices is fewest possible, as every contractible simplicial complex on seven vertices 
is collapsible \cite{bagchi2005combinatorial}. No triangulation of the Dunce Hat is collapsible, since there are no free edges to start with.

The Dunce Hat of Figure \ref{DH:1} admits two anticollapsing moves, the addition of the tetrahedron 1245 or alternatively the addition of the tetrahedron 1367. 
In Figure~\ref{DH:2}  we added 1367. All of the triangles in 1367 are free, 
since this is now the only tetrahedron present. If we collapse away the triangle 367, we recover the initial complex of Figure~\ref{DH:1}. 
If instead we choose to delete the free triangle 136, we obtain the triangulation displayed in Figure~\ref{DH:3}. This triangulation 
has a free edge, 16, that allows us to get rid of the triangle 167. After this elementary collapse, the edge 17 becomes free,
allowing us to remove the triangle 137. But now the edge 13 is free, and it can easily be seen
that the deletion of the triangle 138 paves the way to a full collapse down to a single vertex. 

\begin{lemma}
One pure elementary $3$-expansion suffices to reduce to a vertex the $8$-vertex triangulation of the Dunce Hat from Figure~\ref{DH:2}(a).
%A single pure elementary $3$-expansion suffices to reduce the vertex-minimal $8$-vertex triangulation of the Dunce Hat
%from Figure~\ref{DH:2}(a) to a single vertex.
\end{lemma}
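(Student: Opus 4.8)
The plan is to produce one explicit pure elementary $3$-expansion together with an explicit collapsing sequence ending at a point, and then to check both --- this is a finite verification rather than a structural argument. First I would pin down the facet list of the $8$-vertex Dunce Hat triangulation $D$ of Figure~\ref{DH:1} (the one recorded in \cite{library}), so that every ``freeness'' claim below reduces to a lookup. The expansion I would use is the gluing of the tetrahedron $\sigma=[1,3,6,7]$. To see that this is legitimate, I would check that the subcomplex of $D$ induced on the vertex set $\{1,3,6,7\}$ is precisely $\{1\}*\partial[3,6,7]$, i.e.\ the three triangles $[1,3,6],[1,3,7],[1,6,7]$ joined along the three edges through $1$; this is a cone over a circle, hence a $2$-ball, so attaching $\sigma$ along it is a pure elementary $3$-expansion --- of the ``three-facet'' type of Example~\ref{ex:bistellar}, the one that reverses a subdivision.

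After this single expansion $\sigma$ is the only tetrahedron of $D\cup\sigma$, so all four of its triangular faces are free. I would then perform the collapse of the pair $(\sigma,[1,3,6])$: the point is to delete a triangle that was \emph{already} in $D$ --- had we instead removed the freshly created triangle $[3,6,7]$, we would simply recover $D$ --- and this one choice (step (CC) of RSHT) is the only genuinely clever move in the proof. From here I claim the residual $2$-complex is collapsible. Following Figure~\ref{DH:3}: the edge $[1,6]$ has become free and collapses with $[1,6,7]$; then $[1,7]$ is free and collapses with $[1,3,7]$; then $[1,3]$ is free and collapses with $[1,3,8]$; and so on. The tail of this sequence is an ordinary, bounded sequence of elementary collapses, which I would either spell out from the facet list or certify with the implementation \cite{RHExt}; morally, once the vertex $1$ has been peeled off, the cyclic identification responsible for the non-collapsibility of the Dunce Hat is broken, and a collapsible complex remains that folds up to a single vertex.

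Since the whole run involves exactly one pure elementary $3$-expansion (the gluing of $[1,3,6,7]$) and nothing but collapses afterwards, the bound ``one'' is achieved; and it is best possible, since $D$ has no free faces at all and hence admits no collapse before some expansion is made. I do not expect a real obstacle here: the argument is a terminating verification, and the only thing that takes thought is identifying the triangle $[1,3,6]$ to remove in the (CC) step. The one place to be careful is that the freeness of an edge at an intermediate stage depends on the global facet list, not just on the local picture in Figure~\ref{DH:3}, so each collapse should be checked against that list (or against the computer run) rather than eyeballed.
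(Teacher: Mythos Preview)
Your proposal is correct and follows exactly the argument the paper gives in the paragraph preceding the lemma: attach the tetrahedron $[1,3,6,7]$, collapse it with the existing triangle $[1,3,6]$, and then free the edges $[1,6],[1,7],[1,3]$ in turn to peel off $[1,6,7],[1,3,7],[1,3,8]$, after which the remainder collapses. Your extra remark that one expansion is also necessary (since $D$ has no free faces) is a correct minor addition.
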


In $10^4$ runs, RSHT used on average 2.4145 pure elementary $3$-expansions to reduce the $8$-vertex 
Dunce Hat to a point; see Section~\ref{sec:contractible} and Table~\ref{tbl:examples}.

\subsection{The Abalone}

The Abalone \cite{Metzler1993}, sometimes called \emph{Bing's House with one room}, is another example of a contractible but non-collapsible complex. 
We are not aware of any triangulation of this space in the literature, so we present one, \texttt{Abalone}, with 15 vertices: 

\medskip
\begin{tabular}{llllllll}
1\,2\,7 & 1\,2\,9 & 1\,3\,8 & 1\,3\,9 & 1\,4\,7 & 1\,4\,8 & 1\,4\,9 & 2\,3\,7   \\
2\,3\,15 & 2\,9\,15 &  3\,7\,8 & 3\,9\,14 & 3\,14\,15 & 4\,5\,7 & 4\,5\,8 &  4\,6\,7 \\
4\,6\,9  & 5\,6\,9 & 5\,6\,10 &  5\,7\,10 & 5\,8\,9 & 6\,7\,11 & 6\,10\,11 & 7\,8\,10 \\
7\,8\,11 & 8\,9\,12 & 8\,9\,13 &  8\,10\,12 &  8\,11\,13 & 8\,12\,13 & 9\,12\,14 & 9\,13\,15  \\
10\,11\,12 & 11\,12\,13 & 12\,13\,14 & 13\,14\,15.
\end{tabular}
\medskip\\

\begin{figure}[!htb]
\centering
 \includegraphics[width=7.5cm]{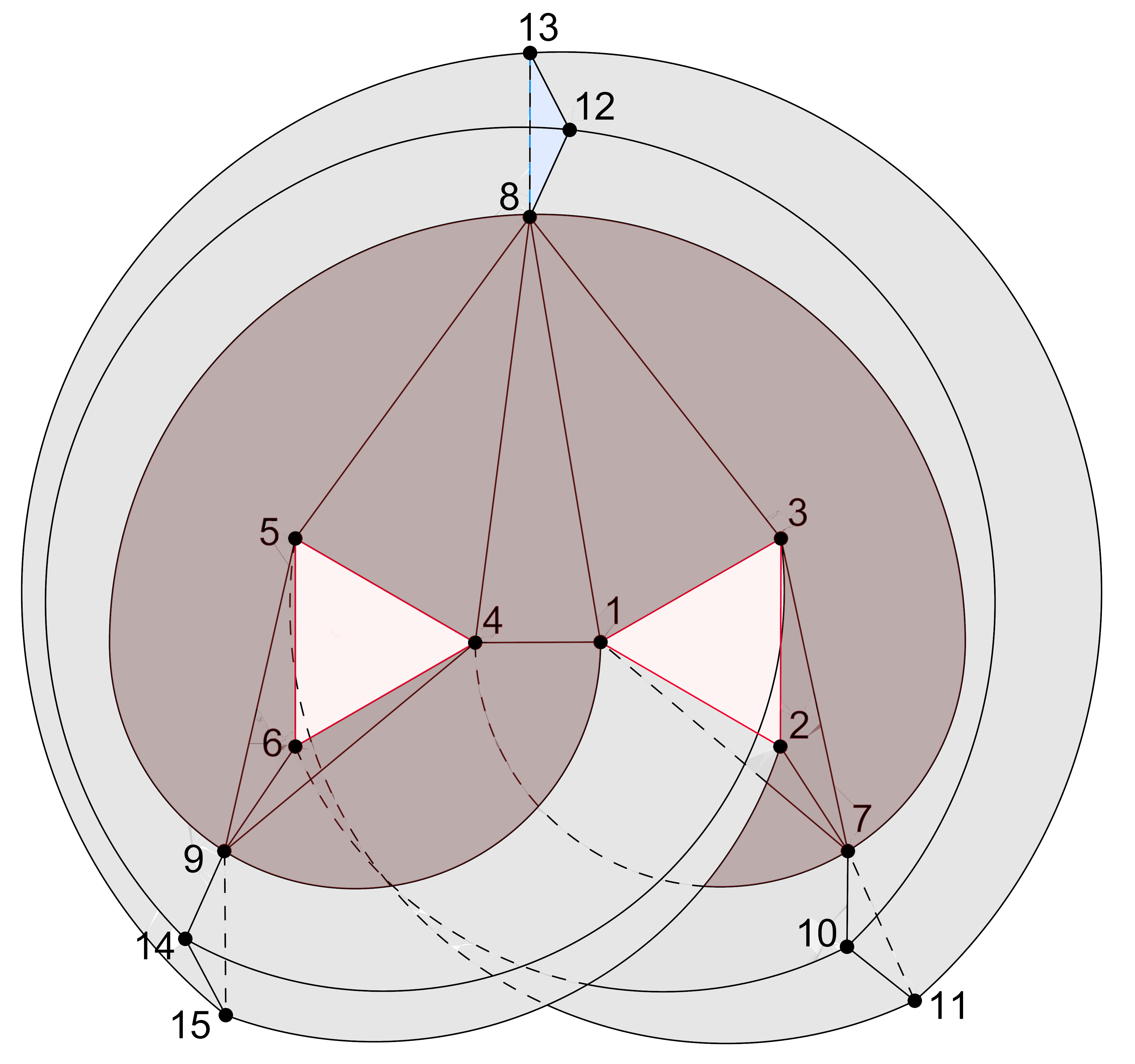} \hfill
 \includegraphics[width=6.2cm]{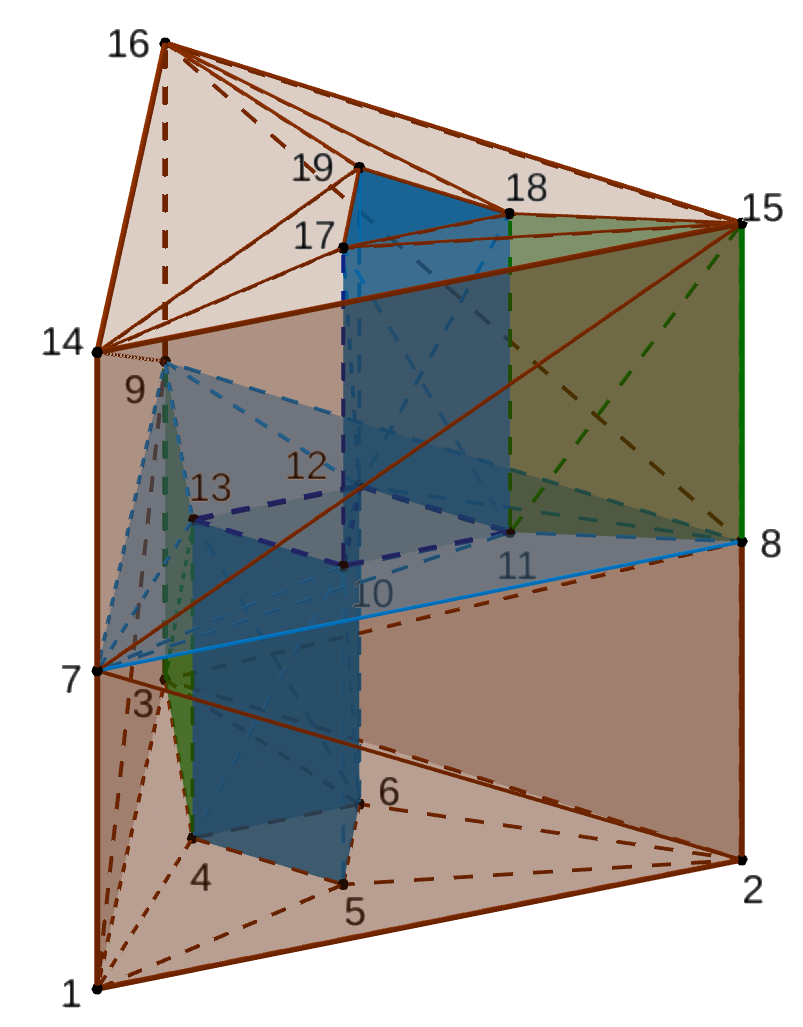}
 \captionsetup{width=.9\linewidth}
 \caption{Triangulations \texttt{Abalone} of the Abalone (left) and \texttt{BH} of Bing's house with two rooms (right).}
 \label{figure:AbaloneBing}
\end{figure}

Figure \ref{figure:AbaloneBing} displays this triangulation, although some diagonals have been omitted for reasons of pictorial clarity. 
Essentially, the triangulation consists of a membrane (in dark) from which two prismatic tunnels (in light) originate 
at the two empty triangles $1\,2\,3$ and $4\,5\,6$; and the tunnels are separated by the highlighted triangle 8\,12\,13. 
The Abalone is contractible as can be seen by filling in the two tunnels.

RSHT can reduce the Abalone to a point using only three expansions. 
%A possible way to do so is to free the edge $8\,12$ of Figure \ref{figure:AbaloneBing} 
%by adding the three tetraedra $8\,9\,12\,13$, $9\,12\,13\,14$, and $8\,11\,12\,13$. 
One way to do so is to free the edge $8\,9$ of Figure \ref{figure:AbaloneBing} by first adding the three tetrahedra 
$8\,9\,12\,13$, $9\,12\,13\,14$, and $9\,13\,14\,15$, in this order, as anticollapsing moves.
The resulting complex is then collapsible. This can either be verified by hand, or via the \texttt{random\textunderscore discrete\textunderscore morse} algorithm
implemented in \texttt{polymake} \cite{benedetti2014random}: The three tetrahedra fill in the prism between the triangle $8\,12\,13$ and the (formerly empty) triangle $9\,14\,15$.
By collapsing away this prism, the edge $8\,9$ becomes free so that the (dark) membrane around the empty triangle $4\,5\,6$ can be collapsed away, which frees the 
tunnel originating at this empty triangle. Its removal then allows to collapse the remaining disk. 

We can interpret the anticollapsing moves followed by collapses as operations that move the walls of the tunnel 
so that eventually the obstruction to collapsibility vanishes.

\subsection{Bing's House with two rooms} 

Bing's House with two rooms  \cite{Bing1964} is an early example of a contractible space no triangulation of which is collapsible. 
For our purposes, we triangulate Bing's House  as a triangular prism with two floors, two tunnels to reach the floors, 
and all rectangular walls subdivided into two triangles each. Figure~\ref{figure:AbaloneBing} displays the following (small) triangulation $BH$
with $f=(19, 65, 47)$ (with the list of facets also available online as example \texttt{BH} at~\cite{library}):

\medskip
\begin{tabular}{llllllll}
1\,2\,5   &  1\,2\,7   & 1\,3\,4   & 1\,3\,9       & 1\,4\,5   & 1\,7\,9     &  2\,3\,6  &  2\,3\,8  \\
 2\,5\,6  &   2\,7\,8  &  3\,4\,6  &  3\,4\,13  &  3\,8\,9  & 3\,9\,13 &  4\,5\,10  &  4\,6\,13  \\
 4\,10\,13  &  4\,12\,13   &  5\,6\,10  &   6\,10\,12  &  6\,12\,13  &  7\,8\,11  &  7\,8\,15  & 7\,9\,13  \\
 7\,9\,14  &  7\,10\,11 &  7\,10\,13 &  7\,14\,15  &  8\,9\,12  &  8\,9\,16  &  8\,11\,12  &  8\,11\,15  \\
 8\,15\,16  &  9\,12\,13  &  9\,14\,16  & 10\,11\,17  &  10\,12\,17  &   11\,12\,18  &  11\,15\,18  &  11\,17\,18 \\
 12\,17\,19  &  12\,18\,19  & 14\,15\,17  &  14\,16\,19   & 14\,17\,19  &  15\,16\,18  &  15\,17\,18  & 16\,18\,19.
\end{tabular}
\medskip

RSHT is able to reduce Bing's house to a point by means of five (successive) expansions (in the upper room, each followed by collapses
so that the outer walls of Bing's house are moved towards the upper tunnel). Here is a possible strategy.
By successively adding five tetrahedra in the upper room of our Bing's House triangulation, we fill in a cubical prism 
between the horizontal square \mbox{7--8--11--10} of the medium floor and the square 14--15--18--17 of the ceiling.
The first two tetrahedra 7\,8\,11\,15 and 11\,15\,17\,18 can be added independently, and their addition are proper anticollapsing steps.
The third tetrahedron 7\,11\,15\,17 is a pure expansion, and the addition of the two final tetrahedra 7\,10\,11\,17 and 7\,14\,15\,17 
are again anticollapsing steps. The newly introduced cubical prism connects the outer vertical square 7--8--15--14 
with the vertical square \mbox{10--11--18--17} of the upper tunnel. The resulting complex is collapsible; an explicit collapsing sequence proving this claim is detailed below. 

We start from the outside, by perforating the back square 7--8--15--14. 
Then we entirely remove the interior of the cubical prism along with the two triangles 7\,8\,15 and 7\,14\,15 of the back square
and the two triangles  14\,15\,17 and 15\,17\,18 of the top square. The result is an indented Bing's House triangulation 
with two new side triangles  7\,10\,17 and 7\,14\,17. But now the edge 1\,7\,18 has been freed, and we can use it to collapse away 
the subdivided squares of the triangulation one by one. First the square 10--11--18--17 is collapsed away, which frees the 
edge 10\,11. This edge in turn can be used to remove the horizontal square 7--8--11--10, thus freeing the edge 78. 
Next, we remove the squares 1--2--8--7, 2--3--9--8, 1--3--9--7, the vertical wall 3--4--13--9, then all triangles of the lower floor, 
then the lower tunnel, to end up with the indented upper room with empty triangle 10\,12\,13. 
This remaining complex is a triangulated disc and thus collapsible.
         
% Pure elementary expansions:= [7,10,11,17], [7,11,15,17], [7,8,11,15], [7,14,15,17],  [11,15,17,18].

\mathversion{bold}       
\subsection{Bing's House with $k$ rooms}
\mathversion{normal}

A recent example of a non-collapsible, contractible complex is \emph{Bing's House with three rooms (and thin walls)} by Tancer~\cite{tancer2016recognition}.
He introduced the example as a gadget to prove that the problem of recognizing collapsible complexes is NP-complete.
The basic layout of the example can be found in~\cite{tancer2016recognition}. Here, we give an explicit triangulation $BH(3)$; 
and extend this construction to $k$ rooms,  $BH(k)$, $k\geq 3$.

The starting point for the construction of $BH(3)$ is to have a ground floor with three triangular holes as depicted in Figure~\ref{figure:Bing3rooms}. 
The floor has the  following triangles:

\medskip
\begin{tabular}{llllllll}
 1\,2\,5       &   1\,2\,15     &   1\,4\,6     &   1\,4\,10     &   1\,5\,7        &   1\,6\,7      &  1\,9\,11    &   1\,9\,14   \\
 1\,10\,12   &   1\,11\,12     &   1\,14\,16    &   1\,15\,16      &   2\,3\,5    &   2\,13\,15  &  3\,4\,6   &  3\,5\,6   \\
 4\,8\,10     &  8\,9\,11  &  8\,10\,11   &  9\,13\,14   & 13\,14\,15.
 \end{tabular}
 \medskip
 
 \begin{figure}
\centering
 \includegraphics[height=6cm]{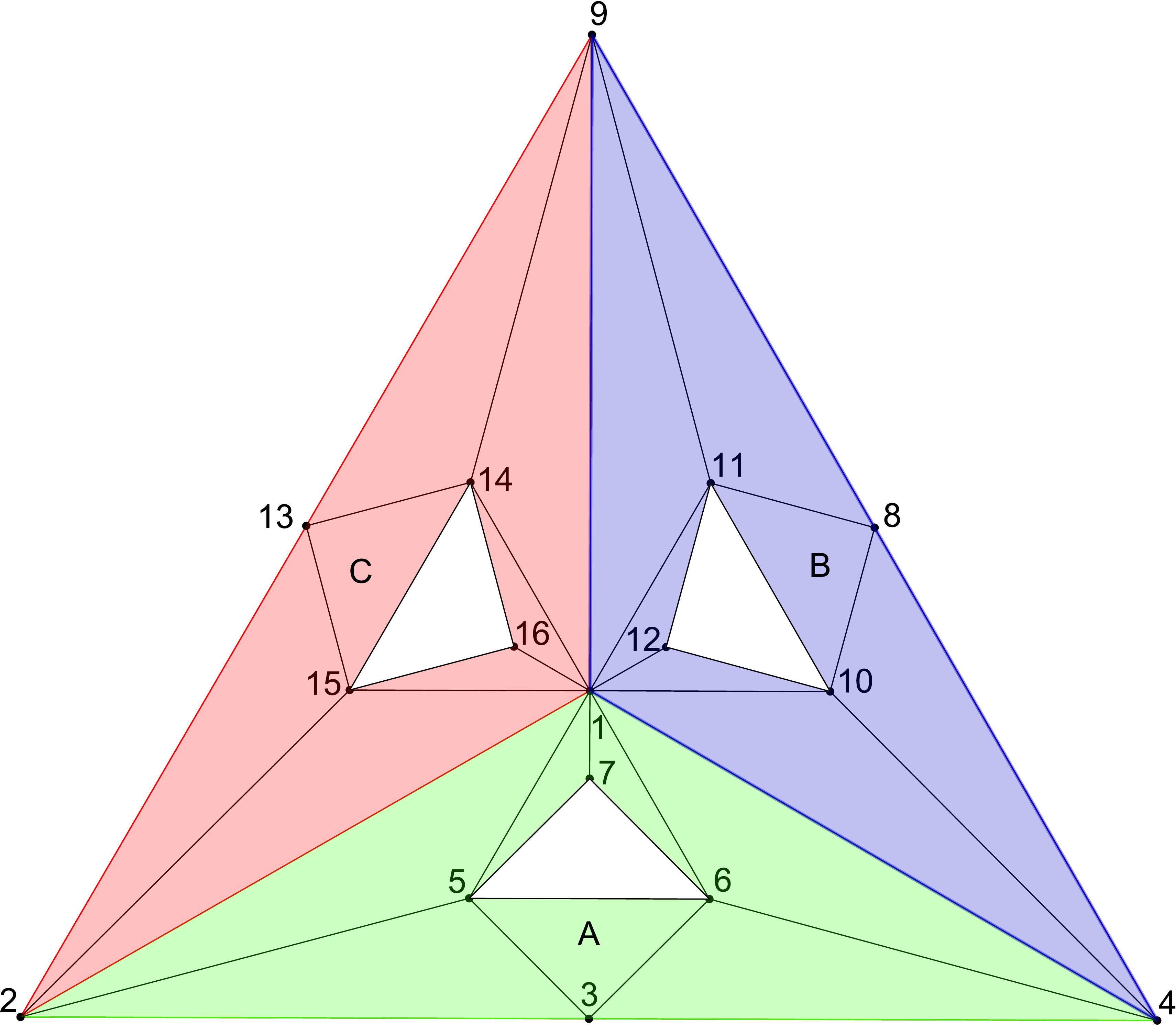} \hfill
 \includegraphics[height=6cm]{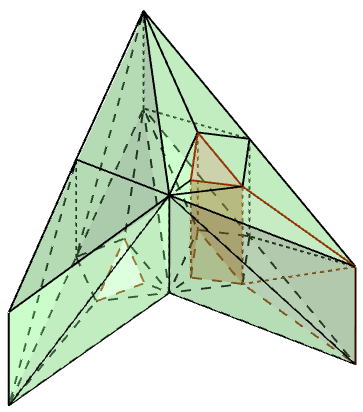}
 \captionsetup{width=.9\linewidth}
 \caption{Ground floor (left) and room $R_2$ (right) of Bing's House $BH(3)$ with three rooms.}
 \label{figure:Bing3rooms}
\end{figure}

Onto the ground floor, we glue three rooms in a coherent way. Room $R_1$ is glued onto the two regions A and B 
and uses nine additional vertices from $17$ to $25$. Room $R_2$, depicted in Figure~\ref{figure:Bing3rooms},  
is glued onto the regions B and C and uses the nine vertices from $26$ to $34$. 
Finally, room $R_3$ is glued onto the regions C and A with further nine vertices ranging from $35$ to $43$.
The rooms $R_2$ and $R_3$ are cyclic copies of the room $R_1$, where $9$ and $18$ are added to the vertex-labels $17$ to $25$ of room $R_1$,
respectively. Concretely, the triangles of room $R_1$ are

\medskip
\begin{tabular}{llllllll}
      1\,2\,17  &  1\,9\,17  &   2\,3\,18  & 2\,5\,18   &  2\,17\,18      &  3\,4\,19   &       3\,18\,19    &    4\,8\,20      \\
      4\,19\,20  &  5\,6\,21      &  5\,7\,21     & 5\,18\,21       &  6\,7\,22   &   6\,21\,22  &  7\,21\,23   &   7\,22\,23  \\
  8\,9\,24  & 8\,20\,24     &  9\,17\,25  &  9\,24\,25     &  17\,18\,21   &    17\,20\,22   &  17\,20\,24  & 17\,21\,23   \\
  17\,21\,23  &  17\,24\,25     &  18\,19\,21   &   19\,20\,22   &  19\,21\,22.
    \end{tabular}
 \medskip
    
\noindent
Those of room $R_2$ are

\medskip
\begin{tabular}{llllllll}
1\,2\,26 & 1\,4\,26 & 2\,13\,33 & 2\,26\,34 &  2\,33\,34 & 4\,8\,27 & 4\,10\,27 & 4\,26\,27 \\
8\,9\,28 &  8\,27\,28 & 9\,13\,29 & 9\,28\,29 & 10\,11\,30 & 10\,12\,30 &  10\,27\,30 &  11\,12\,31 \\
11\,30\,31 &  12\,30\,32 & 12\,31\,32 &    13\,29\,33 & 26\,27\,30 & 26\,29\,31 & 26\,29\,33 & 26\,30\,32 \\
26\,31\,32 & 26\,33\,34 &   27\,28\,30 &  28\,29\,31 & 28\,30\,31,
    \end{tabular}
\medskip

\noindent
and those of room $R_3$ are

\medskip
\begin{tabular}{llllllll}
   1\,4\,35 & 1\,9\,35 & 2\,3\,38 & 2\,13\,37 & 2\,37\,38 &  3\,4\,42 & 3\,38\,42 & 4\,35\,43 \\
   4\,42\,43 & 9\,13\,36 &  9\,14\,36 &  9\,35\,36 & 13\,36\,37 &  14\,15\,39 &  14\,16\,39 & 14\,36\,39 \\
   15\,16\,40 & 15\,39\,40 & 16\,39\,41 & 16\,40\,41 & 35\,36\,39 &  35\,38\,40 & 35\,38\,42 & 35\,39\,41 \\
   35\,40\,41 & 35\,42\,43 & 36\,37\,39 & 37\,38\,40 & 37\,39\,40.  
\end{tabular}
\medskip
\noindent

The three rooms $R_1$, $R_2$, and $R_3$ are then all glued to the upper side of the ground floor. 
Since the vertices of the upper layer of a room are distinct from the vertices
of the upper layers of the other two rooms, there is no conflict for the 
chosen gluing to the same side. To enter the interior of a room, one has to first pass through the tunnel from above
of the room to the left, before the room itself can be entered from below through the
lower left empty triangle.

The previous construction can be generalized to create Bing's Houses $BH(k)$ with $k$ rooms, $k\geq 3$.  
Instead of just three regions, start with $k$ regions that have a triangular hole each, cyclically arranged around a central vertex $1$ on the ground floor,
and attach to it $k$ rooms, $R_1, \ldots, R_k$, in a coherent way, as before. The resulting triangulation has face vector
\[ f=(14k+1, 50k, 36k). \]
A C++-implementation\, \texttt{BH\_k.cc}\, by Lofano to generate the examples $BH(k)$ along with explicit triangulations
\texttt{BH\_3}, \texttt{BH\_4}, and \texttt{BH\_5} can be found online at \cite{library}.

Our next result highlights that in terms of simple-homotopy theory, $BH(k)$ is easy to understand.

\begin{theorem} \label{thm:Bingk}
For any $k\geq 3$, Bing's House with $k$ rooms, $BH(k)$, can be formally deformed to a point using only six pure expansions.
\end{theorem}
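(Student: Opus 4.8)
The plan is to mimic the collapsing strategy that already works for $BH(2)$. Since $BH(k)$ has no free faces, any formal deformation to a point must begin by going up in dimension, so I would perform all six pure elementary $3$-expansions up front, concentrating them inside a single designated room, say $R_1$, together with the two regions (``A'' and ``B'' in Figure~\ref{figure:Bing3rooms}) of the ground floor to which $R_1$ is glued. Concretely, I expect the six added tetrahedra to triangulate the prismatic region between the ``floor square'' of $R_1$ (the two ground-floor triangles bounding the relevant hole) and the ``ceiling square'' of $R_1$ (two triangles among the vertices $17,\dots,25$) --- i.e.\ the region whose filling makes the entrance tunnel of $R_1$ collapsible. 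A box whose two horizontal square faces are triangulated with parallel diagonals decomposes into exactly six tetrahedra, which would account for the count ``six'' here versus ``five'' for $BH(2)$, where the two diagonals happen to cross. I would first check that these six tetrahedra can be attached back to back: each is glued to the current complex along a $d$-ball formed by one, two, or three of its own triangles, so each step is a legitimate pure elementary expansion (a mixture of genuine anticollapses and pure expansions along $2$ or $3$ triangles), and, as noted right after the definition of pure elementary expansion, their composition is a formal deformation preserving dimension and simple-homotopy type.

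The second and main part is to exhibit an explicit collapsing sequence from $BH(k)$ together with the six new tetrahedra down to a point. Following the $BH(2)$ argument: starting from the outside, perforate the outer vertical wall of $R_1$; then collapse away the interior of the filled prism together with the outer-wall triangles and the ceiling-wall triangles it meets; this frees an edge on the upper rim of $R_1$, which can be used to peel off the subdivided wall squares of $R_1$ one at a time; removing the last of these frees the tunnel feeding into $R_1$. At this point the cyclic dependence among the rooms is broken: in $BH(k)$ the obstruction to collapsibility is precisely that room $R_i$ can only be reached through the tunnel descending from room $R_{i-1}$, but once $R_1$ has been removed the face of $BH(k)$ through which $R_2$ is entered has become free, so $R_2$ collapses by the very same local sequence of moves (the rooms $R_2,\dots,R_k$ being cyclic relabeled copies of $R_1$), then $R_3$, and inductively $R_k$. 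No further expansions are needed for $R_2,\dots,R_k$, which is why six expansions suffice uniformly for every $k\ge 3$. Once all $k$ rooms are gone, what remains of the ground floor is a triangulated disk --- all the holes have been opened up --- and hence collapses to a point.

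I expect the main obstacle to be the verification of this explicit collapsing sequence, which is long and requires careful bookkeeping of the evolving set of free faces; realistically one checks it either by hand, room by room, or mechanically with \texttt{polymake}'s collapse and \texttt{random\textunderscore discrete\textunderscore morse} routines~\cite{benedetti2014random}, exactly as for $BH(2)$. Two points deserve particular care. First, one must confirm that all six expansions, and the collapses dismantling $R_1$ and the adjacent part of the ground floor, are genuinely local to $R_1$ and its two incident regions, so that the same six moves work for all $k$; this should follow from the fact that the upper-layer vertices of distinct rooms are pairwise disjoint. Second, one must check that the cascade is correctly ordered --- that removing $R_{i-1}$ really does free the face through which $R_i$ is entered, so the induction closes --- and, as a sanity check, that six expansions are actually needed, i.e.\ that no smaller number already unlocks the cyclic obstruction in $BH(k)$ for $k\ge 3$.
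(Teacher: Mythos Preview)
Your plan is correct and matches the paper's proof: add six tetrahedra filling the cubical prism in room $R_1$ between the floor square $2$--$3$--$6$--$5$ and the ceiling square $18$--$19$--$22$--$21$ (the paper uses $2\,3\,5\,18$, $3\,5\,18\,19$, $5\,18\,19\,21$, $3\,5\,6\,19$, $5\,6\,19\,21$, $6\,19\,21\,22$), perforate the back wall from outside, collapse the prism, and then cascade through the remaining rooms by induction. Two minor caveats: in the paper's explicit triangulation the edge freed after dismantling the prism is $2\,3$, which sits on a wall of $R_k$, so the induction actually runs $R_k, R_{k-1}, \ldots$ rather than $R_2, R_3, \ldots$ as you guessed; and your final ``sanity check'' that six expansions are \emph{needed} is not part of the statement (which only asserts sufficiency) and is not treated in the paper.
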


\begin{proof}
Since the rooms $R_1,\dots,R_k$ are all identical, we extend to $BH(k)$ the labelling scheme that we used 
for the ground floor and the rooms of $BH(3)$. First we do all the expansions in room $R_1$. By adding the following six tetrahedra 
\[ 2 \, 3 \, 5 \, 18, \:\: 3 \,  5 \, 18 \, 19, \:\:  5 \, 18\,  19 \, 21, \: \: 3 \, 5 \, 6\,  19, \: \:5 \,  6 \, 19 \, 21, \: \:6 \, 19 \, 21\,  22\] 
we fill in the cubical prism between the horizontal square on the vertices 2--3--6--5 of the main floor and the horizontal square 
on the vertices 18--19--22--21 of room~$R_1$'s ceiling. We may now start the collapsing sequence from the outside. 
We  perforate the back square 2--3--19--18 and then remove the whole interior of the prism, along with the back square 2--3--19--18 
and the horizontal square 18--19--22--21 of the ceiling. Now the edge $21 \, 22$ is free. Thus, we can proceed exactly as for 
Bing's House with two rooms: We collapse away the squares 5--6--22--21 and 2--3--5--6, in this order. But now the edge $2 3$ is free; 
so we can use it to collapse away room $R_k$. By induction, we can thus collapse all the rooms one by one. %in counterclockwise direction.
 \end{proof}

How does this compare with the experimental results? In $10^4$ runs, RSHT was always able to reduce 
Bing's house with three rooms, $BH(3)$, to a point, using on average about $148$ additional tetrahedra.
In the ``best run'', only $12$ additional tetrahedra were used. For Bing's house with $k$ rooms, $BH(k)$, $4\leq k\leq 7$,
in $10^4$ runs, even in the best case, RSHT tends to perform a growing number of expansions; see Table \ref{tbl:examples}. 
This growing number of used tetrahedra is not surprising, due to the probabilistic model that we used: When selecting from more rooms, 
the number of options for possible expansions gets larger. So if we keep the number of rounds fixed, 
the chances to pick the cleverest sequence of pure expansions will get thinner. 

\begin{table}
\small\centering
\defaultaddspace=0.15em
\caption{RSHT run for a selection of contractible, non-collapsible complexes from \cite{library}.}\label{tbl:examples}
\vspace{-1em}
\begin{tabular*}{\linewidth}{@{\extracolsep{\fill}}r@{\hspace{1mm}}r@{\hspace{1mm}}r@{\hspace{1mm}}r@{\hspace{1mm}}r@{}}
\\\toprule
 \addlinespace
 \addlinespace
 complex  &  $f$-vector &  rounds  & \# expansions & \# expansions \\ 
                &                   &               & (minimum)     & (mean)
  \\ \midrule
   Dunce hat                                 & $(8,24,17)$ & $10^4$ & $1$ & $2.4145$ \\

\addlinespace
\midrule

   Abalone                                   & $(15,50,36)$ & $10^4$ & $3$ & $32.4156$ \\
    
   Bing's House                            & $(19,65,47)$ &  $10^4$  &  $5$ &  $58.0964$             \\
      
   Bing's House 3 rooms, $BH(3)$ & $(43,150,108)$ & $10^4$ & $12$ & $147.9727$ \\
   
   $BH(4)$                                       & $(57,200,144)$ & $10^4$ & $15$ & $167.7727$ \\
   
   $BH(5)$                                      & $(71,250,180)$ & $10^4$ & $27$  & $195.8890$ \\
   
   $BH(6)$                                      & $(85,300,216)$ & $10^4$ & $34$  & $221.2596$ \\
   
   $BH(7)$                                      & $(99,350,252)$ & $10^4$ & $41$  & $244.5763$ \\
   
\addlinespace
\midrule

   \texttt{two\textunderscore optima}  & $(106,596,1064,573)$ &  $10^3$  &  $1$ &   $7.050$            \\ 
   \addlinespace
   \midrule
   Knotted balls &&&& \\
   
   \addlinespace

   Furch's knotted ball & $(380,1929,2722,1172)$  &  $10^3$  &  $1459$  & $1949.950$       \\ 
   
   \texttt{double\textunderscore trefoil\textunderscore ball} & $(15,93,145,66)$ & $10^3$ & $1$ & $29.600$ \\
   
    \texttt{triple\textunderscore trefoil\textunderscore arc} & $(17,127,208,97)$  &  $10^3$  & $6$&    $94.678$  \\ 
        
   \addlinespace
\bottomrule
\end{tabular*}
\end{table}

\section{Experiments on various topologies and substructures}
\label{sec:substructures}

In this section, we explore how our algorithm RSHT performs for further interesting simplicial complexes, whether contractible or not. 
%All timings were taken on an Intel(R) Core(TM) i7-4720HQ CPU @ 2.60GHz and 16 GB RAM with Ubuntu 20.04.2 LTS (Linux version 5.4.0-73-generic).
All timings were taken on an Intel(R) Core(TM) i7-4720HQ CPU with 2.60 GHz and 16 GB RAM.

\subsection{Contractible, non-collapsible complexes}
\label{sec:contractible}

Table~\ref{tbl:examples} lists the number of expansions used for the Dunce Hat and Bing's Houses described in the previous section, 
as well as for the contractible complex \texttt{two\textunderscore optima} of \cite{ABL}  and for some knotted balls \cite{lutz2004small, benedetti2013knots}. 
Furch's knotted $3$-ball is the only example in this set for which the runtime is not negligible. In fact, due to the large number of expansions required, 
 it took an average of 85 seconds to complete one round of the algorithm for this $3$-ball.

The explanation of Table~\ref{tbl:simplices} is as follows. If one starts with a single $d$-simplex, with $8\leq d\leq 15$, and one tries to collapse it down to a point,
sometimes one gets stuck in contractible, non-collapsible complexes of intermediate dimension \cite{lofano2019}.  For each initial $d$-simplex we recorded $10$ such examples, 
and on each one of these 10 examples we let  RSHT run for $10^3$ rounds. In each of the rounds, RSHT was able to reduce the respective examples to a point: 
In columns three and four of Table~\ref{tbl:simplices}, we recorded the minimal and average numbers of expansions used. With the increase of the dimension, 
the runtime started to become an issue. For the largest examples, with $d=15$, it took on average around 25 seconds to complete one round.

\begin{table}
\small\centering
\defaultaddspace=0.15em
\caption{RSHT run for contractible, non-collapsible complexes obtained when trying to collapse a $d$-simplex.}\label{tbl:simplices}
\vspace{-1em}
\begin{tabular*}{\linewidth}{@{\extracolsep{\fill}}r@{\hspace{1mm}}r@{\hspace{1mm}}r@{\hspace{1mm}}r@{}}
\\\toprule
 \addlinespace
 \addlinespace
 $d$  & \#  examples $\times$ \# rounds  & \# expansions & \# expansions \\ 
         &                                                      & (minimum)      & (mean)
  \\ \midrule
    
  $8$ &    $10 \times 10^3$   &  $1.0$     &   $1.9310$            \\ 
  $9$ &    $10 \times 10^3$   &  $1.0$     &   $3.6845$            \\ 
$10$ &    $10 \times 10^3$   &  $1.0$     &  $8.4502$            \\ 
$11$ &    $10 \times 10^3$   &  $1.1$     &   $7.6552$            \\ 
$12$ &    $10 \times 10^3$   &  $2.5$     &   $29.4564$            \\ 
$13$ &    $10 \times 10^3$   &  $1.2$     &   $38.3988$            \\ 
$14$ &    $10 \times 10^3$   &  $7.9$     &   $174.7835$            \\ 
$15$ &    $10 \times 10^3$   &  $36.3$    &   $205.1362$            \\ 
    \addlinespace
\bottomrule
\end{tabular*}
\end{table}

\subsection{Submanifolds and non-manifold substructures in manifolds}

If we remove a facet from a triangulation of the $d$-dimen\-sional sphere $S^d$, the resulting simplicial complex is a triangulated $d$-ball, 
and thus has the simple-homotopy type of a point by Whitehead's Theorem~\ref{thm:Whitehead}. In case the initial 
$d$-manifold $M^d$ is not a sphere, the removal of a simplex from a triangulation yields a simplicial complex 
that, depending on $M^d$, may deform to a submanifold or to a non-manifold substructure in $M^d$.
Table~\ref{tbl:manifolds} provides results for some classical examples:
\begin{table}
\small\centering
\defaultaddspace=0.15em
\caption{RSHT run for manifold triangulations minus a facet.}\label{tbl:manifolds}
\vspace{-1em}
\begin{tabular*}{\linewidth}{@{\extracolsep{\fill}}l@{\hspace{1mm}}l@{\hspace{1mm}}l@{\hspace{1mm}}l@{}}
\\\toprule
 \addlinespace
 \addlinespace
  initial complex              &    initial $f$-vector                      &   resulting complex    &  resulting $f$-vector     \\
                                       &                                                   &                                   &                                     \\ 
\midrule
  $\mathbb{R}P^3$ \cite{Walkup}       &    $(11,51,80,40)$                      &   $\mathbb{R}P^2$    &   $(6, 15, 10)$               \\
  $\mathbb{R}P^4$ \cite[ Ch.~3]{lutzdiss}              &    $(16,120,330,375,150)$         &   $\mathbb{R}P^3$    &   $(11,51,80,40)$          \\
  $\mathbb{C}P^2$  \cite{kuhnel19839}     &    $(9,36,84,90,36)$                   &   $S^2$                      &   $(4, 6, 4)$                   \\   
  $\mathbb{H}P^2$ \cite{brehm}                &    $(15,105,455,1365,3003,$      &   $S^4$     &   $(6, 15, 20, 15, 6)$     \\
                                                                   &    $4515,4230,2205,490)$  &                                    &                                             \\
\midrule
  Poincar\'e $3$-sphere \cite{bjorner2000}  &    $(16,106,180,90)$               & $\mathbb{Z}$-acyclic 2-complex            &    $(10, 40, 31)$ \\
    \addlinespace
\bottomrule
\end{tabular*}
\end{table}
Starting with the vertex-minimal triangulation of $\mathbb{R}P^3$ with $11$ vertices, and removing a facet, in $10^4$ runs of RSHT
it took on average $25.2510$ expansions to reach the $6$-vertex triangulation of $\mathbb{R}P^2$. 
From $\mathbb{R}P^4$ to $\mathbb{R}P^3$ it took $885.5957$ expansions. From $\mathbb{C}P^2$ to $S^2$ 
no expansions were used around half of the times; the average number of expansions needed was $2.3543$.
Finally, it took $30.0784$ expansions to reach $S^4$  from $\mathbb{H}P^2$.
For the Poincar\'e homology 3-sphere \cite{bjorner2000}, the RSHT algorithm found a 2-di\-men\-sional $\mathbb{Z}$-acyclic 2-complex on 10 vertices 
(the boundary of the identified dodecahedron) using 2031.732 expansions in less than two minutes per run.

\begin{figure}[!tb]
    \centering
        \begin{subfigure}{.4\textwidth}
            \begin{tabular}{cl}
                \hline
                    torsion & $f$-vector \\
                \hline
                    $\mathbb{Z}_3$ & $(8, 24, 17)$ \\
                    $\mathbb{Z}_4$ & $(8, 26, 19)$ \\
                    $\mathbb{Z}_5$ & $(9, 32, 24)$ \\
                    $\mathbb{Z}_6$ & $(9, 33, 25)$ \\
                    $\mathbb{Z}_7$ & $(9, 34, 26)$ \\
                    $\mathbb{Z}_8$ & $(9, 35, 27)$ \\
                    $\mathbb{Z}_9$ & $(9, 36, 28)$ \\
                    $\mathbb{Z}_{10}$ & $(9,36,28)$ \\
                    $\mathbb{Z}_{11}$ & $(10, 42, 33)$\\
                    $\mathbb{Z}_{12}$ & $(10, 42, 33)$ \\
                    $\mathbb{Z}_{13}$ & $(10,43,34)$ \\
                    $\mathbb{Z}_{14}$ & $(11,50,40)$ \\
                    $\mathbb{Z}_{15}$ & $(11,50,40)$ \\
                \hline
            \end{tabular}
        \end{subfigure}
        \hspace{0.02\textwidth}
    \begin{subfigure}{.53\textwidth}
        \centering
        \includegraphics[height=0.4625\linewidth]{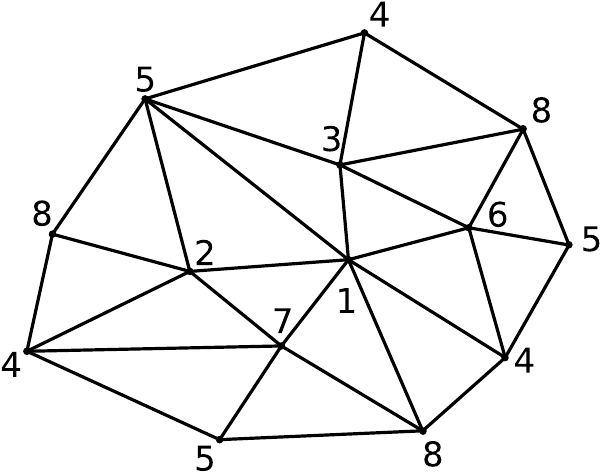}
        \caption{Complex \texttt{d2\_n8\_3torsion} with 3-torsion.}
        \label{figure:3Torsion}
    \end{subfigure}
    \begin{subfigure}{.51\textwidth}
        \centering
        \includegraphics[width=.85\linewidth]{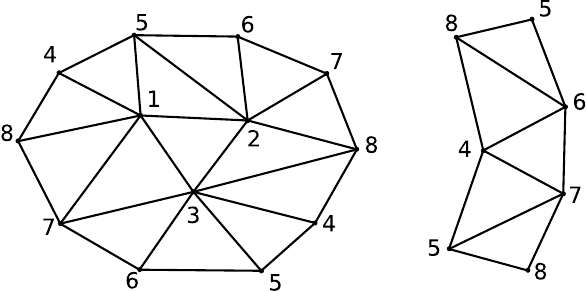}
        \caption{Complex \texttt{d2\_n8\_4torsion} with 4-torsion.}
        \label{figure:4Torsion}
    \end{subfigure}
    \hspace{0.01\textwidth}
    \begin{subfigure}{.46\textwidth}
        \centering
        \includegraphics[width=0.8\linewidth]{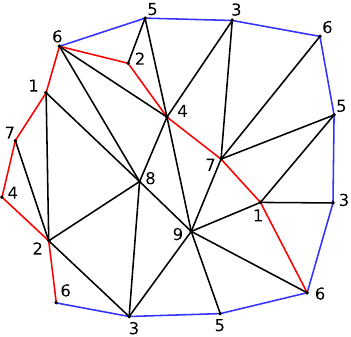}
        \caption{Complex \texttt{d2\_n8\_5torsion} with 5-torsion.}
        \label{figure:5Torsion}
    \end{subfigure}
    \caption{Small substructures with $p$-torsion of the lens spaces $L(p,1)$.}
    \label{Fig:Lens}
\end{figure}
 
The $3$-dimensional lens spaces $L(p,q)$, introduced by Tietze~\cite{tietze}, are well-known topological spaces with torsion in homology. 
Starting from triangulations of the $3$-mani\-folds $L(p,1)$  \cite{brehm1993, Lens} for $p\geq 3$, we aimed for small triangulations 
of $2$-di\-men\-sional simplicial complexes that still have $p$-torsion. (The case $p=2$ has been already considered, since $L(2,1)= \mathbb{R}P^3$.) 
The table in the top left of Figure~\ref{Fig:Lens} gives the $f$-vectors of these smaller complexes; Figure~\ref{Fig:Lens} (a)--(c) shows 
resulting small triangulations \texttt{d2\_n8\_3torsion}, \texttt{d2\_n8\_4torsion}, and \texttt{d2\_n8\_5torsion} (with facets lists available at \cite{library})
with torsion $\mathbb{Z}_3$, $\mathbb{Z}_4$, and $\mathbb{Z}_5$, respectively. The example \texttt{d2\_n8\_3torsion} has the combinatorial symmetry $(2,3)(4,8)(6,7)$;
the example \texttt{d2\_n8\_4torsion} has symmetry $(1,2)(4,6)(7,8)$. In (b), the obtained complex is the union of an $8$-vertex triangulation of the projective plane 
and a M\"{o}bius band. The complex \texttt{d2\_n8\_5torsion} origins from a triangulated disk by identifications highlighted in blue and red.

The following natural problem is open for $p\geq 3$:

\begin{question}
What is the minimal number of vertices $n_{\scriptsize\mbox{min}}(p)$ for a simplicial $2$-complex with $p$-torsion?
\end{question}

An earlier construction of a 2-dimensional simplicial complex with 3-torsion as a \emph{sum complex} on eight vertices
is by Linial, Meshulam and Rosenthal \cite{linial2010sum}. Their example is based on the following collection of subsets 
of $\Z_8$: 
$$X_{\{0,1,3\}}=\{\,\sigma \subset \Z_8 \,:\, |\sigma|=3,\,\, \sum_{x \in \sigma}x\equiv 0,1 \ \mbox{or}\ 3\,\mbox{(mod 8)}\,\}.$$ 
This complex has complete $1$-skeleton and face vector $f= (8,28,21)$.
Three edges of the complex are free, and after collapsing the respective triangles we reach a $2$-complex with $f = (8,25,18)$,
which still has one triangle and one edge more than the example \texttt{d2\_n8\_3torsion}. 
By runninng RSHT on the triangulation with $18$ triangles repeatedly, we again reach \texttt{d2\_n8\_3torsion}---or a second non-isomorphic triangulation 
with the same $f$-vector that is obtained from  \texttt{d2\_n8\_3torsion} by flipping the edge $1$--$5$.

\begin{conjecture}
The examples \texttt{d2\_n8\_3torsion} and  \texttt{d2\_n8\_4torsion} have component-wise minimal
$f$-vectors for complexes with $3$- and $4$-torsion, respectively.
\end{conjecture}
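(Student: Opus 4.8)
The plan is a finite, computer‑assisted verification, preceded by a structural reduction that cuts the search down to a tractable size. For a $2$‑complex $H_2$ is always free, so all torsion of interest sits in $H_1$; and a complex $K$ realizing a component‑wise minimal $f$‑vector among $2$‑complexes with $p$‑torsion ($p\in\{3,4\}$) admits no reduction that shrinks its $f$‑vector without destroying the $p$‑torsion. Concretely: $K$ has no free faces (a free vertex with its unique edge, or a free edge with its unique triangle, can be collapsed away, preserving the homotopy type while strictly decreasing $f$); every edge of $K$ lies in at least one triangle (an edge in no triangle can be deleted without affecting the torsion of $H_1$, since this removes at most a free $\mathbb{Z}$‑summand); and $K$ is connected and has no cut vertex (if $K=K_1\cup_v K_2$ then $H_1(K)=H_1(K_1)\oplus H_1(K_2)$, the $p$‑torsion lives in one summand, and discarding the other strictly shrinks $f$). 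Hence every vertex of $K$ has degree $\ge 2$, every edge lies in at least two triangles — so $3f_2\ge 2f_1$ — and $f_1-f_0+1=\beta_1$ of the $1$‑skeleton. Once $f_0$ is bounded, these constraints bound $f_1$ and $f_2$ as well.

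\textbf{Step 1 (vertex bound $f_0\ge 8$).} I would enumerate, up to the $S_7$‑action, all $2$‑dimensional simplicial complexes on at most $7$ vertices; the full $2$‑skeleton of $\Delta^{6}$ has only $\binom{7}{3}=35$ triangles, so this is a small search. For each complex one computes the Smith normal form of $\partial_2$ and checks that $H_1$ never contains $\mathbb{Z}_3$ or $\mathbb{Z}_4$ as a summand — the only torsion occurring is $2$‑torsion, already witnessed by $\mathbb{R}P^2$ on six vertices. This gives $f_0\ge 8$ for both $p=3$ and $p=4$. A structural reason why $H_1$ of a $2$‑complex on $\le 7$ vertices carries no odd torsion would be cleaner, but I do not know one; the enumeration settles it unconditionally (note that \cite{bagchi2005combinatorial} only addresses contractible complexes).

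\textbf{Step 2 (the $8$‑vertex search).} For $p=3$ I would enumerate, up to the $S_8$‑action and subject to the constraints of the first paragraph (every vertex of degree $\ge 2$, no cut vertex, every edge in $\ge 2$ triangles — forcing $3f_2\ge 2f_1$), all $2$‑complexes on $8$ vertices with $f_1\le 24$ and $f_2\le 17$, and verify by a Smith‑normal‑form computation that the only ones whose $H_1$ contains $\mathbb{Z}_3$ have $f=(8,24,17)$; the analogous search in the box $f_1\le 26$, $f_2\le 19$ settles the $\mathbb{Z}_4$ case. In practice it is more efficient to first enumerate the admissible $1$‑skeleta — $2$‑connected graphs on $8$ vertices of minimum degree $\ge 2$ with at most $24$ (resp.\ $26$) edges — and then, for each such graph $G$, run over the ways of choosing $\le 17$ (resp.\ $\le 19$) triangular faces of $G$ that cover every edge at least twice; the torsion of $H_1$ is then the torsion of the cokernel of the resulting map into the cycle lattice $\mathbb{Z}^{\beta_1(G)}$, an integer matrix of size at most $17\times 17$ (resp.\ $19\times 19$). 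Together with Step 1 this proves the conjecture, once one records (a single Smith‑normal‑form check) that \texttt{d2\_n8\_3torsion} and \texttt{d2\_n8\_4torsion} realize $f=(8,24,17)$ and $(8,26,19)$ with torsion exactly $\mathbb{Z}_3$ and $\mathbb{Z}_4$.

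The main obstacle is the size of the $8$‑vertex search: without the ``no free faces / no cut vertex'' reduction it is hopeless (on the order of $\binom{56}{17}$ triangle subsets), and even with it one must implement efficient isomorph rejection and aggressive pruning — abandoning a partial complex as soon as its uncovered edges cannot all be doubly covered within the remaining triangle budget, or as soon as the rational and mod‑$3$ ranks of $\partial_2$ are forced to agree. A secondary subtlety is that the $p=4$ statement concerns an element of order exactly $4$, so one must read off the full torsion subgroup and separate $\mathbb{Z}_4$ from $\mathbb{Z}_2\oplus\mathbb{Z}_2$. A purely combinatorial argument — say a counting argument modulo $p$ exploiting that $\partial_2$ is a $\{0,\pm1\}$‑matrix with exactly three nonzero entries per column — would be far more satisfying, but I see no such argument, and I expect the exhaustive search to be the realistic route; this is presumably why the statement is posed here only as a conjecture.
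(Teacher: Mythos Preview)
The statement you are addressing is a \emph{Conjecture} in the paper, not a theorem: the authors give no proof, and immediately before it they pose as an open \emph{Question} the weaker problem of determining $n_{\min}(p)$ for $p\ge 3$. So there is no ``paper's own proof'' to compare your proposal against; you are outlining an attack on an open problem.

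That said, your plan is logically sound. The structural reductions are all correct: collapsing a free pair is a homotopy equivalence; deleting an edge in no triangle changes $H_1$ only by a free $\mathbb{Z}$-summand (the long exact sequence of the pair $(K,K\setminus e)$ gives either $H_1(K)\cong H_1(K')$ or $H_1(K)\cong H_1(K')\oplus\mathbb{Z}$ according as $e$ is or is not a bridge); and a cut-vertex decomposition splits $H_1$ by Mayer--Vietoris. With these reductions the two-step search is a valid proof scheme, and you correctly identify the bottleneck as isomorph rejection and pruning in the $8$-vertex enumeration.

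Two points worth flagging. First, your Step~1 --- that no $2$-complex on $\le 7$ vertices carries $\mathbb{Z}_3$- or $\mathbb{Z}_4$-torsion --- is precisely the assertion $n_{\min}(3)=n_{\min}(4)=8$, which the paper lists as open. You are not citing a known fact; your argument rests on actually executing that enumeration, and its outcome is a prediction, not yet a verified result. Second, there is a mild ambiguity in ``component-wise minimal''. Your scheme establishes that $(8,24,17)$ is a minimal element in the componentwise partial order on $f$-vectors of $3$-torsion complexes. If instead the intended meaning is that each coordinate is separately the minimum over all such complexes, then you would additionally need to rule out $3$-torsion complexes with $f_1<24$ or $f_2<17$ on arbitrarily many vertices, which your Step~2 does not address. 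You should state explicitly which reading you are proving.
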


\begin{table}
\small\centering
\defaultaddspace=0.15em
\caption{RSHT run for triangulations of sphere products minus a facet.}\label{tbl:sphereproducts}
\vspace{-1em}
\begin{tabular*}{\linewidth}{@{\extracolsep{\fill}}l@{\hspace{1mm}}l@{\hspace{1mm}}l@{\hspace{1mm}}l@{}}
\\\toprule
 \addlinespace
 \addlinespace
  initial complex              &    initial $f$-vector                      &   resulting complex    &   size of the resulting complex         \\
                                       &                                                   &                                   &                                     \\ 
\midrule
  $S^2 \times S^1$         &  $(12, 48, 72, 36)$                                                 &   $S^2 \vee S^1$       &  $\partial\Delta_3 \cup K^1(3.5382)$   \\
  $S^3 \times S^1$         &  $(15, 75, 150, 150, 60)$                                                 &   $S^3 \vee S^1$       &  $\partial\Delta_4 \cup K^1(7.7617)$   \\
  $S^2 \times S^2$         &  $(16, 84, 216, 240, 96)$                                                 &   $S^2 \vee S^2$       &  $\partial\Delta_3 \cup \partial\Delta_3$   \\
  $S^3 \times S^2$         &  $(20, 130, 420, 710, 600, 200)$                                                 &   $S^3 \vee S^2$       &  $\partial\Delta_4 \cup K^2(11.5460)$   \\
    \addlinespace
\bottomrule
\end{tabular*}
\end{table}

In the description of the torus $S^1\times S^1$ as a square with opposite edges identified, the removal of the interior
of the identified square yields the wedge product $S^1\vee S^1$ of two circles $S^1$ that are glued together at a point.
In general, if we remove a facet from a triangulation of a sphere product, the resulting complex is simple-homotopy
equivalent to the wedge product of the constituting spheres. In the case of $S^2 \times S^1$, the wedge product $S^2 \vee S^1$ 
is of mixed dimension. Since in the implementation of RSHT our focus is on the top-dimensional faces, RSHT is not 
further touching lower-dimensional parts once these are reached via collapses. Thus, the resulting triangulations of $S^2 \vee S^1$
are of the form $\partial\Delta_3 \cup K^1$, consisting of the vertex-minimal triangulation of $S^2$ as the boundary complex $\partial\Delta_3$
of a $3$-simplex $\Delta_3$ union a $1$-dimensional complex $K^1$. 

Depending on the intersection of $K^1$ with $\partial\Delta_3$, $K^1$  either is a path (a $1$-dimensional ball) or a loop (a $1$-sphere $S^1$). 
For a unified description in Table~\ref{tbl:sphereproducts}, we write $K^1(4.5382)$ to point out that $K^1$ has (in $10^4$ runs of RSHT) 
on average $4.5382$ edges. Table~\ref{tbl:sphereproducts} gives results for further sphere products, where for the lower-dimensional parts 
the average number of facets are listed. The initial triangulations of the sphere products in Table~\ref{tbl:sphereproducts} are produced 
via product triangulations of boundaries of simplices \cite{lutz2003geometric}. 

In a separate experiment, we started with a triangulation of $S^1$ with $10$ vertices and with a triangulation of $S^2$ with $100$ vertices 
as the boundary complex of a random simplicial $3$-polytope, for which $100$ points on the round $2$-dimensional sphere were chosen randomly 
via the \texttt{rand\_sphere} client of the software system \texttt{polymake} \cite{polymake:2000}). The initial triangulation of $S^2 \times S^1$ 
has face-vector $f=(1000,6880,11760,5880)$. It took RSHT an average of $1108.23$ expansions, in $10^2$ runs, to reduce the triangulation (minus a facet)
to a triangulation $\partial\Delta_3 \cup K^1(21.76)$ of the wedge product $S^2 \vee S^1$. 
We repeated the same experiment, but this time applying $200,000$ preliminary random bistellar edge flips 
to the $100$-vertex triangulation of~$S^2$, before taking the sphere product. The results of this experiment are similar
to the one before (though with a slightly higher average number of expansions). This suggests that RSHT may be reliable even 
for larger complexes.

\subsection{Dimensionality reduction}

\begin{table}
\small\centering
\defaultaddspace=0.15em
\caption{RSHT run for triangulations of products of surfaces.}\label{tbl:surfaceproducts}
\vspace{-1em}
\begin{tabular*}{\linewidth}{@{\extracolsep{\fill}}l@{\hspace{1mm}}l@{\hspace{1mm}}l@{\hspace{1mm}}l@{}}
\\\toprule
 \addlinespace
 \addlinespace
  initial complex              &    initial $f$-vector                        &   resulting complex    &   resulting smallest  \\
                                       &                                                     &                                   &   $f$-vector             \\ 
\midrule
  $T \times I$                 &   $(77, 511, 854, 420)$                  &  $T$                          &  $(7, 21, 14)$    \\
  $g_2 \times I$             &   $(121, 929, 1586, 780)$             & $g_2$                       &  $(9, 32, 24)$    \\
  $g_5 \times I$             &   $(253, 2183, 3782, 1860)$          & $g_5$                       &  $(12, 60, 40)$  \\
  $g_6 \times I$             &   $(297, 2601, 4514, 2220)$          & $g_6$                       &  $(13, 69, 46)$  \\
  $g_{10} \times I$         &   $(473, 4273, 7442, 3660)$          & $g_{10}$                  &  $(18, 108, 72)$   \\
  $g_{50} \times I$         &   $(2233, 20993, 36722, 18060)$  & $g_{50}$                  &  $(51, 683, 534)$ \\
    \addlinespace
\bottomrule
\end{tabular*}
\end{table}

``Finding meaningful low-dimensional structures hidden in their high-dimensional observations'' \cite{tenenbaum2000} is a major theme in analyzing higher-dimensional data of various origins. Usually, the data is given as a finite set of points
in some Euclidean or metric space and is then often transformed to (higher-dimensional) simplicial complexes 
via taking \v{C}ech complexes or Vietoris--Rips complexes. Here, we did not start with explicit data sets, 
but instead ``hid'' a (closed) surface in a higher-dimensional product as another model to test RSHT on.

Starting with the standard $7$-vertex triangulation $T$ of the torus, we first took connected sums of $T$ to create surfaces of higher genus $g_k$, $k\geq 2$. 
Then we took the cross product $g_k\times I$  of $g_k$ with an interval (subdivided into 10 edges on 11 vertices), and reduced the resulting triangulation
of the cross product with RSHT. In every single one out of $10^2$ runs, the product $g_k\times I$ gets reduced back to a small or even vertex-minimal triangulation
of the original surface of genus $g_k$, as displayed in Table~\ref{tbl:surfaceproducts}. In a second experiment, we performed $200,000$ random edge flips
to ``randomize'' the surfaces $g_k$; then, we took  cross products with the $10$-edge interval~$I$. Again, in $10^2$ runs of RSHT, we always achieved 
the respective $f$-vectors of Table~\ref{tbl:surfaceproducts}.

In a final experiment, we started with the triangulation of the surface $g_{50}$ from before, but this time we added $100$ vertices in subdivision steps
before performing the $200,000$ random edge flips. We then took again the cross product with the interval $I$ to get a randomized triangulation of $g_{50} \times I$ 
with $f=(2728, 24278, 42212, 20760)$. We then took another cross product of this $3$-manifold with boundary with the $4$-simplex~$\Delta_{4}$.
The resulting complex is $7$-dimensional with around $34$ million faces and face vector 
$$f=(13420, 386630, 2446620, 6910210, 10432052, 8786210, 3909060, 718200).$$
In less than an hour and by using a few thousand expansions, in each out of $10^2$ runs of RSHT, 
we were able to reduce this complex back to a triangulation of the $2$-dimensional orientable surface 
of genus $50$ with fewer than $60$ vertices. In some cases we were even able to reach the same $f$-vector with $51$ vertices as in Table~\ref{figure:5Torsion}. 
Due to memory constraints that come from the computation of the Hasse diagram of the starting complex (requiring around 10 GB of RAM for this example), 
this was the largest complex that we were able to study.

\subsection{Akbulut--Kirby 4-spheres} \label{sec:AK}

As stated early on, contractibility is, in general, undecidable. However, it takes considerable effort to pose challenges to RSHT. 
A notoriously hard series of complexes is given by the triangulations of the Akbulut--Kirby $4$-dimensional spheres \cite{tsuruga2013}.
These PL-triangulated standard $4$-dimensional spheres are built in an intricate way via non-trivial presentations of the trivial group
as their fundamental group \cite{akbulut1985}. By Pachner's theorem, these examples are bistellarly equivalent to the boundary of the $5$-simplex,
and by Whitehead's theorem, the examples minus a facet are simple-homotopy equivalent to a single vertex. 
However, establishing connecting sequences of bistellar flips failed in \cite{tsuruga2013}, beyond the first easy examples of the series.
Indeed, here RSHT made no progress either, even when we set max\textunderscore step = 1,000,000 and waited for a total runtime of 60 hours.

\paragraph*{Acknowledgments.} The first author acknowledges support by NSF Grant 1855165, ``Geometric combinatorics and discrete Morse theory''. 
The third author is grateful for support by the Graduate Program ``Facets of Complexity'' (GRK 2434).

{\small
\bibliographystyle{amsalpha}
\bibliography{bibliography}
}

\end{document}